\newcommand{\app}[1]{}
\declaretheorem[name=Proposition]{proposition}
\declaretheorem[name=Definition, sibling=proposition]{definition}
\newcommand{\ALG}{{GREEDY}}
\newcommand{\alg}{{GREEDY}}
\newcommand{\ChanTrain}{{CHAN TRAIN}}
\newcommand{\OPTBTwo}{{OPT\textsubscript{B2}}}
\newcommand{\MDTOPT}{{MDTOPT}}
\newcommand{\evalFigWidth}{0.32}
\def\BibTeX{{\rm B\kern-.05em{\sc i\kern-.025em b}\kern-.08em
    T\kern-.1667em\lower.7ex\hbox{E}\kern-.125emX}}
\newtheorem{example}{Example}
\DeclarePairedDelimiter\floor{\lfloor}{\rfloor}
\newcommand{\bmax}{b_{\max}} \renewcommand{\evalFigWidth}{0.5}
\newcommand{\trnumber}{TKN-18-0003} \newcommand{\trdate}{June 2018}
\newcommand{\trauthor}{Niels Karowski, Konstantin Miller}
\newcommand{\tremail}{karowski@tkn.tu-berlin.de, konstantin.miller@ieee.org}
\newcommand{\trtitle}{Proofs and Performance Evaluation of Greedy Multi-Channel Neighbor Discovery Approaches}
\begin{document}

{
\sffamily

\thispagestyle{empty}

\setlength{\tabcolsep}{0pt} \noindent \begin{tabularx}{\columnwidth}{cXc}
  \includegraphics[height=1cm]{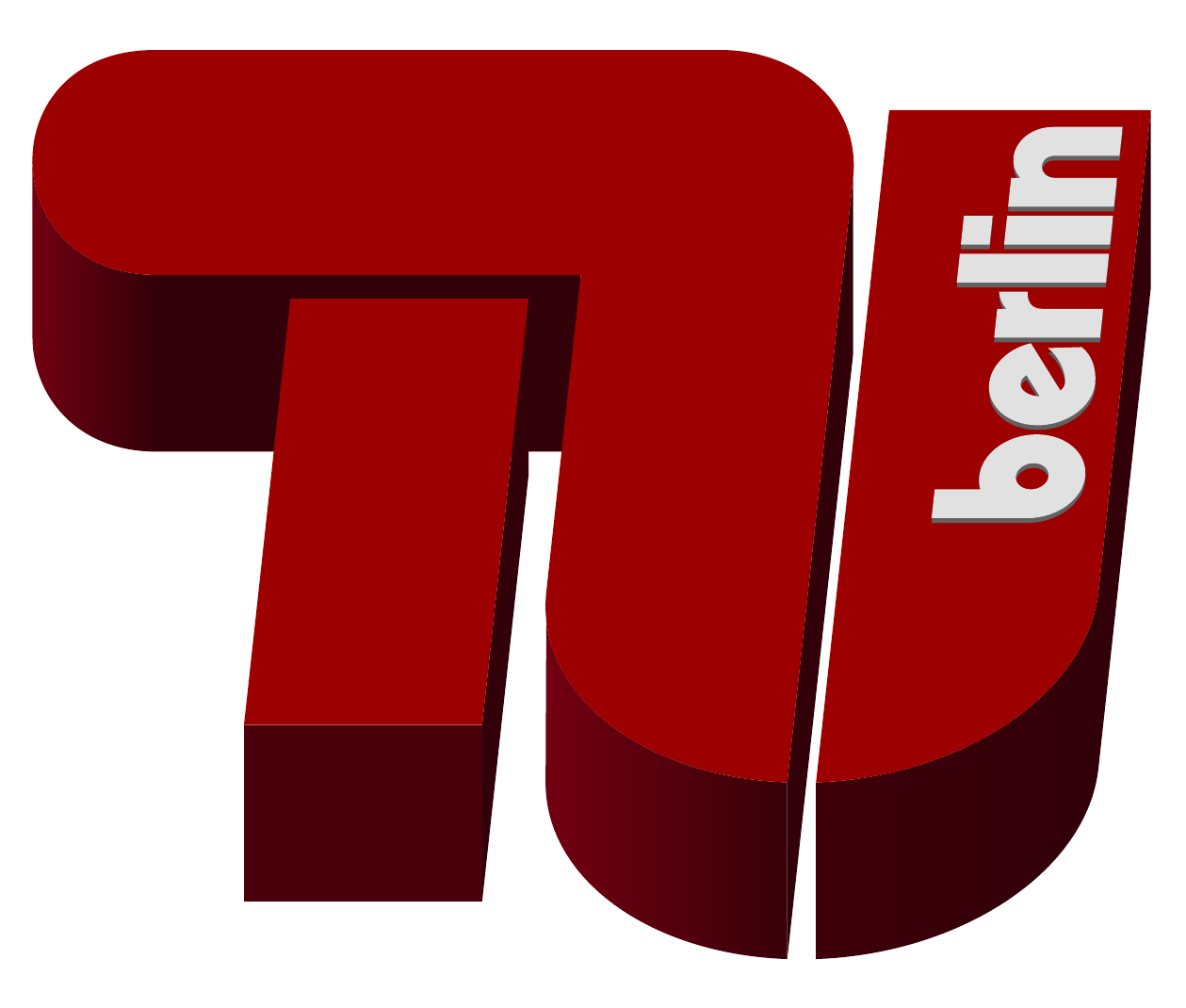}
  & &
  \includegraphics[height=1cm]{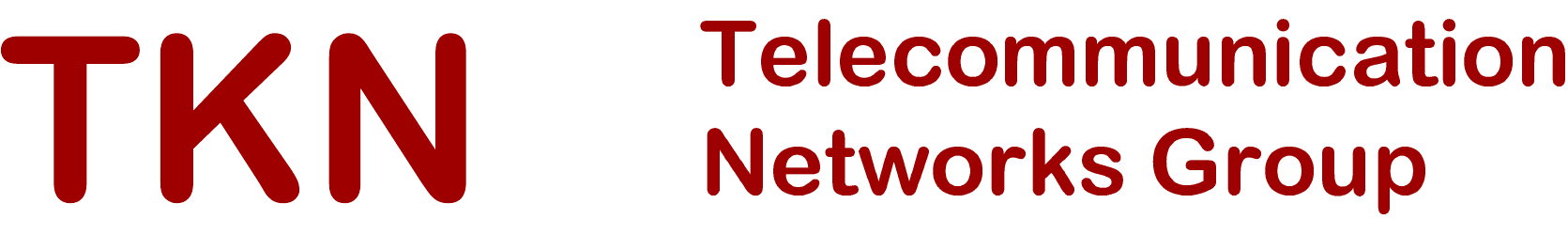}
  \\
\end{tabularx}
\setlength{\tabcolsep}{6pt} 
\vspace{1.0cm}

\begin{center}
{\huge
\noindent
Technische Universität Berlin

\vspace{0.5cm}

\noindent
Telecommunication Networks Group

\begin{center}
\rule{15.5cm}{0.4pt}
\end{center}
}
\end{center}

\begin{minipage}[][11.0cm][c]{14.5cm}
{\Huge

\begin{center}
\trtitle
\end{center}

\begin{center}
{\LARGE \trauthor} \\
{\Large \tremail}
\end{center}

\begin{center}
Berlin, \trdate
\end{center}

\vspace{0.5cm}

}

\begin{center}
\setlength{\fboxrule}{2pt}\setlength{\fboxsep}{2mm}
\fbox{TKN Technical Report \trnumber}
\end{center}

\end{minipage}

\setlength{\fboxrule}{0.4pt}
\setlength{\fboxsep}{0.4pt}

\begin{center}

  \rule{15.5cm}{0.4pt}

  \vspace{0.5cm}

  {\huge {TKN Technical Reports Series}}

  \vspace{0.5cm}

  {\huge Editor: Prof. Dr.-Ing. Adam Wolisz}

  \vspace{0.5cm}

 \end{center}

}
 
\begin{abstract}
\subsection*{\abstractname}
The accelerating penetration of physical environments by objects with information processing and wireless communication capabilities requires approaches to find potential communication partners and discover services. 
In the present work, we focus on passive discovery approaches in multi-channel wireless networks based on overhearing periodic beacon transmissions of neighboring devices which are otherwise agnostic to the discovery process. 
We propose a family of low-complexity algorithms that generate listening schedules guaranteed to discover all neighbors. The presented approaches 
simultaneously depending on the beacon periods optimize the worst case discovery time, the mean discovery time, and the mean number of neighbors discovered until any arbitrary in time. 
The presented algorithms are fully compatible with technologies such as IEEE~802.11 and IEEE~802.15.4. 
Complementing the proposed low-complexity algorithms, we formulate the problem of computing discovery schedules that minimize the mean discovery time for arbitrary beacon periods as an integer linear problem. We study the performance of the proposed approaches analytically, by means of numerical experiments, and by extensively simulating them under realistic conditions. We observe that the generated listening schedules significantly -- by up to factor 4 for the mean discovery time, and by up to 300\% for the mean number of neighbors discovered until each point in time -- outperform the Passive Scan, a discovery approach defined in the IEEE 802.15.4 standard. Based on the gained insights, we discuss how the selection of the beacon periods influences the efficiency of the discovery process, and provide recommendations for the design of systems and protocols.

\end{abstract}

\section{Introduction}

We are currently observing a rapid augmentation of physical objects surrounding us with information processing and wireless communication capabilities. It is estimated that by 2020 25~\cite{gartnerIoT} up to 50~\cite{cisco_IoT} billion objects will be connected to the Internet. This development is leading us to a new era of computing. The resulting network of ``smart'' objects that interact with each other and exchange information without a direct human intervention, the so-called \ac{IoT}, will serve as a foundation for novel applications in a wide range of domains. 

In order to discover services of interest devices will need to detect other entities within communication range that are able to use common communication technology---the so called neighbors.

Neighbor discovery can be done in two fundamentally different ways. For an \emph{active} discovery, the discoverer broadcasts probe requests that must be answered by the neighbors.  
An active discovery is fast but has the drawback that all neighbors have to consume energy by (continuously) listening to potential inquiries even though they might only be interested in being detected but not in discovering their own neighborhood.
In contrast, \emph{passive} schemes perform the discovery by overhearing beaconing messages that are periodically broadcasted (with a specific \ac{BP}) by neighbors interested in being discovered.
The beaconing neighbors themselves are hereby agnostic to the discovery process. 
Let us emphasize that periodic beaconing is already used in many widely deployed technologies such as IEEE~802.11~\cite{ieee80211} and IEEE~802.15.4~\cite{ieee802154}.
In order to be compatible with current state-of-the-art technologies, such as IEEE~802.11 and IEEE~802.15.4, neighbor discovery must support multi-channel environments. 
Finally, we assume lack of time synchronization among the devices involved in the discovery process.

A frequently adopted objective for the design of discovery approaches is the minimization of the \ac{WDT}---time required to detect all \emph{potential} neighbors. A complete discovery is desirable, e.g., in order to avoid interference with neighbors when establishing a new network. In addition, minimizing the \ac{WDT} has the advantage of implicitly minimizing the consumed energy. 
Other applications are interested in the maximization of the number of discoveries until a given point in time, which we call the \ac{NDoT}, as, e.g., in the case of identifying potential forwarders in \aclp{DTN}.
Yet other applications benefit from discovering the individual neighbors as early as possible, e.g., emergency services. Their objective is thus the minimization of the \ac{MDT}. 
Since many devices in IoT environments will be battery powered, and will have limited computational resources, neighbor discovery should be performed in an energy-efficient way with low to moderate computational requirements.

The publication~\cite{Karowski18} associated with this report  presents several novel contributions providing simple and efficient discovery algorithms applicable under realistic conditions:

We provide for the first time a full characterization of the class of listening schedules that are guaranteed to discover all neighbors (we call such schedules \emph{\textbf{complete}}), and that pointwise maximize the \ac{CDF} of the discovery times. The latter feature implies that they optimize all three mentioned performance metrics \emph{simultaneously}: \ac{WDT}, \ac{MDT}, and the \ac{NDoT}. We call these schedules \emph{\textbf{recursive}}, due to their specific structure.

Our second, practically most relevant, contribution consists of several approaches to construct listening schedules that, under certain assumptions, are recursive (and thus inherit the corresponding optimality properties). We define a family of low-complexity algorithms that we call \alg{}, due to their operation mode~\cite{West2001}. Further, we define an algorithm called CHAN TRAIN which is an extension of the \alg{} family, aiming at a reduction of the number of channel switches. 

In general, the performance of discovery algorithms is strongly dependent on the allowed set of \acp{BP}, 
i.e. periods with which beacons are transmitted. To this point - incompatible with the state-of-the-art wireless protocols -
assumptions about the beacon transmission patterns have been made so far in the literature. In this paper, we consider on one hand the most general case $\mathbb{F}_1$, a family containing all
possible \ac{BP} sets, but introduce in addition also two other practically important families of \ac{BP} sets---$\mathbb{F}_2$ and $\mathbb{F}_3$. They include all \ac{BP} sets supported by IEEE 802.15.4 and a large part
of the BI sets supported by IEEE 802.11---two widely adopted standards for wireless communication. 

We prove that for \ac{BP} sets from $\mathbb{F}_3$ the listening schedules computed by GREEDY and CHAN TRAIN are recursive (and thus complete, and optimal w.r.t. the three targeted performance metrics). Moreover, for \ac{BP} sets from $\mathbb{F}_2$ the computed schedules are complete and \ac{WDT}-optimal, while they are close-to-optimal w.r.t. the \ac{MDT}. Finally, we show that even for the most general case of $\mathbb{F}_1$ the computed schedule are complete, close-to-optimal w.r.t. the \ac{MDT}, and still within 30\%  of the optimum for the \ac{WDT}, while this gap decreases for an increasing number of channels.

Our third contribution demonstrates that even for arbitrary \ac{BP} sets from $\mathbb{F}_1$
complete and \ac{MDT}-optimal schedules are achievable, albeit only by solving an \ac{ILP}.
We prove that computed schedules are also \ac{WDT}-optimal for \ac{BP} sets from $\mathbb{F}_2$ and \ac{NDoT}-optimal for \ac{BP} sets from $\mathbb{F}_3$. This approach is attractive due to the broad range of supported \ac{BP} sets. However, it has a high computational complexity and memory consumption, restricting its usage to offline computations, and to scenarios with a moderate number of channels and size of the used \acp{BP}.

As additional contribution, we define an algorithm called \OPTBTwo{} that computes recursive schedules for scenarios, in which the cardinality of the \ac{BP} set is restricted to two entries. 

Unfortunately performing of such discovery in real multi-channel environments suffers under an implementation impact: non-negligible deaf periods occur during the execution of a channel switch resulting in potentially missing some beacons transmitted during such deaf periods. 
Due to this effect even algorithms provably generating complete schedules will, in reality, miss some neighbors - the percentage of missed neighbors can reasonably be expected to increase with the increase of the number of channel switches required by a given algorithm. 
In order to quantify this impact, we perform simulations using a realistic wireless model and device behavior expressing the results in the form of an additional performance metric --- the success rate, which is the fraction of neighbors discovered under this realistic conditions by any algorithm under consideration.
Using this additional performance metric we have derived our next contribution: We suggest two instances of the \ALG{} family of algorithms designed to reduce the number of channel switches and perform their simulative performance evaluation w.r.t achievable fraction of discovered neighbors.

In all evaluations, in addition to comparing the performance with the optimum, we perform a comparison against \acf{PSV}, a discovery scheme defined by the IEEE 802.15.4 standard. We observe that \ALG{} algorithms significantly (by up to several hundreds percent) outperform \ac{PSV} w.r.t. the \ac{MDT} and the \ac{NDoT} in all studied scenarios.

As our final contribution we discuss the strong impact the structure of allowed \ac{BP} sets has on the performance of discovery approaches, and provide recommendations for a \ac{BP} selection that supports efficient neighbor discovery. These recommendations may be useful, on the one hand, for the development of novel wireless communication based technologies that use periodic beaconing messages for management or synchronization purposes, and, on the other hand, for the \ac{BP} selection for existing technologies that support a wide range of \acp{BP}, such as IEEE~802.11.

This report is a supplement to~\cite{Karowski18} and is structured as follows. 
In Section~\ref{sec:examples} additional examples of the properties of the  developed discovery strategies are provided.
Section~\ref{sec:opt_cdt} provides a proof on the optimal duration of a complete discovery.
Properties of recursive schedules regarding the optimality relating to the analyzed performance metrics are described in Section~\ref{app:recursive}.
A performance analysis for the \ALG{} algorithms is provided in Section~\ref{sec:optim_greedy}, for the \ChanTrain{} algorithm in Section~\ref{sec:optim_chantrain} and for the \OPTBTwo{} algorithm in Section~\ref{sec:optim_optbtwo}.
Section~\ref{app:mdt_opt_cdt_opt} provides a proof on the \ac{WDT} of \ac{MDT}-optimal schedules.
The relation between schedules being \ac{NDoT}-optimal and \ac{WDT}-optimal as well as \ac{MDT}-optimal is discussed in Section~\ref{app:mdt_opt_ndot_opt}. 
The computational complexity and memory requirements of a possible \ALG{} implementation is presented in Section~\ref{sec:complexity}.
Finally, Section~\ref{sec:appendix_results} provides additional simulation results.

\section{Examples}
\label{sec:examples}

In the following figures gray squares represent the scanned channels, while the numbers in each square represent the expected value of the fraction of neighbors that can be discovered by scanning a particular channel during a particular time slot. In addition to the assumptions on the uniform distribution of channels and beacon offsets described in Section~III in~\cite{Karowski18}, in the following examples we also assume a uniform distribution of \acp{BP}. That is, the configuration probabilities are assumed to be $P_\kappa=\frac{1}{b_\kappa|B||C|}$, for each $\kappa\in K_{BC}$. Finally, we remark that while the presented examples consider very small scenarios, the illustrated performance gaps may become arbitrarily large if larger scenarios are considered.

\begin{example}[A recursive schedule does not always exist]
\label{exa:recursive}
Consider the setting depicted in Figure~\ref{fig:NonStructuredSchedule}, with $B = \{1,2,3\} \notin \mathbb{F}_2$ and $|C| = 2$.  Vertical dashed lines mark time instants until which all configurations $\left(c,b_i,\delta\right)$ for a \ac{BP} $b_i$ have to be discovered on each channel $c \in C$ in order for the schedule to be recursive. Consequently, gray boxes indicate channels that have to be scanned during the first 4 time slots (uniquely determined up to swapping the channels). Observe that a recursive listening schedule for this example does not exist due to the fact that it would have to discover all remaining neighbor configurations with $b=3$ during the two time slots 5 and 6. However, the number of remaining configurations is 3: $(0, 3, 0)$, $(1, 3, 1)$, and $(1, 3, 2)$.
\end{example}

\begin{figure}[h]
\centering
\includegraphics[width=0.40\textwidth]{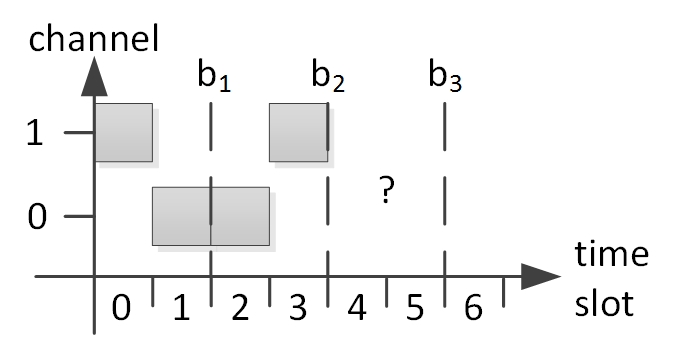}
\caption{Illustration for Example~\ref{exa:recursive}. With $B = \{1,2,3\} \notin \mathbb{F}_2$ and $|C| = 2$, a recursive schedule does not exist.}
\label{fig:NonStructuredSchedule}
\end{figure}

\begin{example}[\ALG{} is not optimal for arbitrary \ac{BP} sets]
\label{exa:greedy_suboptimal}
For arbitrary \ac{BP} sets, a \ALG{} algorithm does not necessarily generate \ac{MDT}-optimal or \ac{WDT}-optimal schedules. Consider a setting with $B = \{1,2,3\}\in\mathbb{F}_1\setminus\mathbb{F}_2$ and $|C| = 3$. Figure~\ref{fig:GREEDY_F1_non_makespan_MDT} shows a \ALG{} listening schedule, which is neither \ac{WDT}-optimal nor \ac{MDT}-optimal, while Figure~\ref{fig:GREEDY_F1_non_makespan_MDT_GENOPT_comp} depicts a schedule, which is both \ac{WDT}-optimal and \ac{MDT}-optimal but not GREEDY.
In particular, optimal value for the \ac{WDT} in this example constitutes 9 time slots, while optimal value for \ac{MDT} is $2.6\overline{1}$  time slots, in contrast to $2.7\overline{2}$ time slots achieved by the schedule computed by \ALG{} . Another example using $B = \{2,3,4,6,12\} \in \mathbb{F}_2 \setminus \mathbb{F}_3$ and $|C| = 2$ is depicted in Figure~\ref{fig:GREEDY_non_makespan_MDT_opt_F2}. The listening schedule generated by \ALG{} is \ac{WDT}-optimal but only achieves a \ac{MDT} of $5.3$ time slots in comparison with the optimal value of $5.1$ time slots.
\end{example}

\begin{figure}[h]
\centering
		\subfloat[GREEDY]{		
        \includegraphics[width=0.7\textwidth]{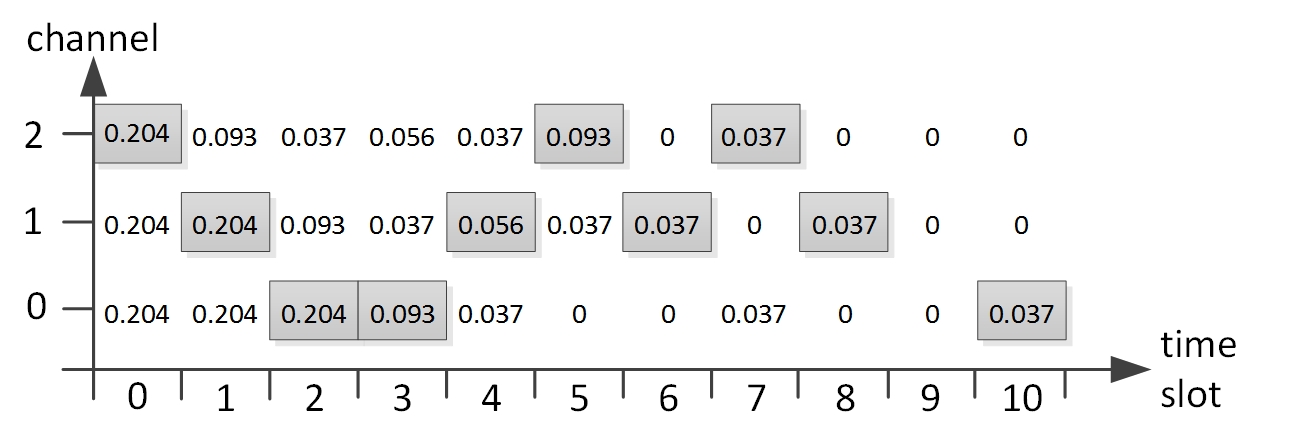}
				\label{fig:GREEDY_F1_non_makespan_MDT}
		} \\
		\subfloat[\ac{MDT}-optimal schedule]{
        \includegraphics[width=0.7\textwidth]{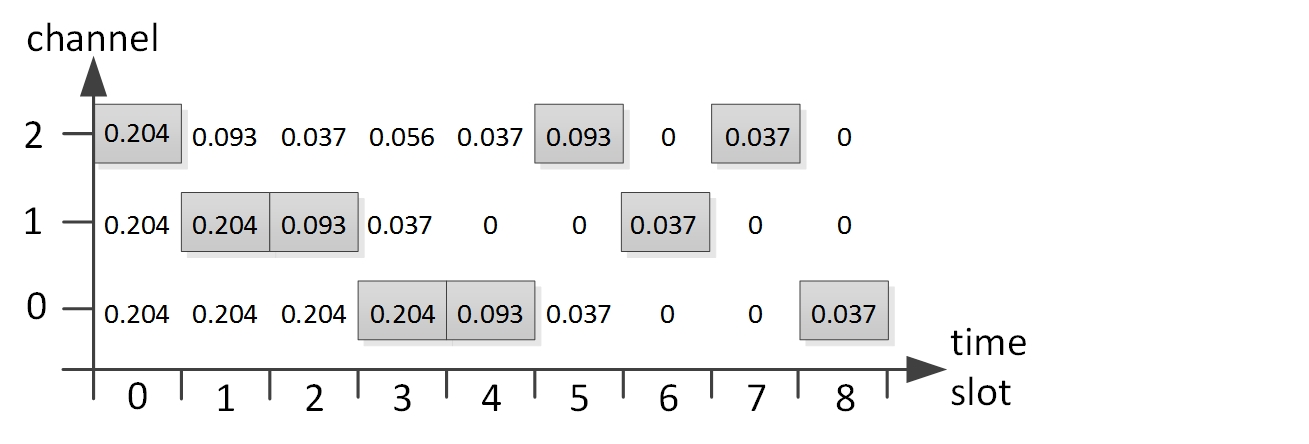}
				\label{fig:GREEDY_F1_non_makespan_MDT_GENOPT_comp}
    }
\caption{Illustration for Example~\ref{exa:greedy_suboptimal} showing a schedule generated by \ALG{} for $B = \{1,2,3\} \not\in \mathbb{F}_2$ and $|C| = 3$ that is neither \ac{MDT}-optimal nor \ac{WDT}-optimal.}
\label{fig:GREEDY_non_makespan_MDT_opt_F1}
\end{figure}

\begin{example}[CHAN TRAIN is not \ALG{} for general \ac{BP} sets]
\label{exa:chantrain_nongreedy}
For a \ac{BP} set $B\not\in\mathbb{F}_3$ Proposition~\ref{prop:chantraingreedy} is no longer true. Figure~\ref{fig:CHAINTRAIN_nonGreedyBehaviour} shows an example setting with $B = \{1,2,3,6\} \in \mathbb{F}_2\setminus\mathbb{F}_3$ and $|C| = 3$, in which this is the case.
\end{example}

\begin{example}[\MDTOPT{} is in general not \ac{WDT}-optimal]
\label{exa:genopt_notmakespanoptimal}

Listening schedules generated by \MDTOPT{} might not be \ac{WDT}-optimal. Figure~\ref{fig:GENOPT_F1_non_makespan_opt} depicts an example using $B = \{1,2,4,5\}$ and $|C| = 2$, for which a \ac{MDT}-optimal listening schedule cannot be constructed within $\max(B) |C|$ time slots. Figure~\ref{fig:GENOPT_F1_makespan_tmax-optMakespan} shows a \ac{MDT}-optimal listening schedule with the additional constraint of using at most $\max(B)|C|$ time slots. Observe that its \ac{MDT} is $2.875$  time slots as compared to the optimal value of $2.75$ time slots of the schedule shown in Figure~\ref{fig:GENOPT_F1_makespan_tmax-lcm}. 
\end{example}

\begin{figure*}
\centering
			\subfloat[GREEDY]{		
        \includegraphics[width=1.0\textwidth]{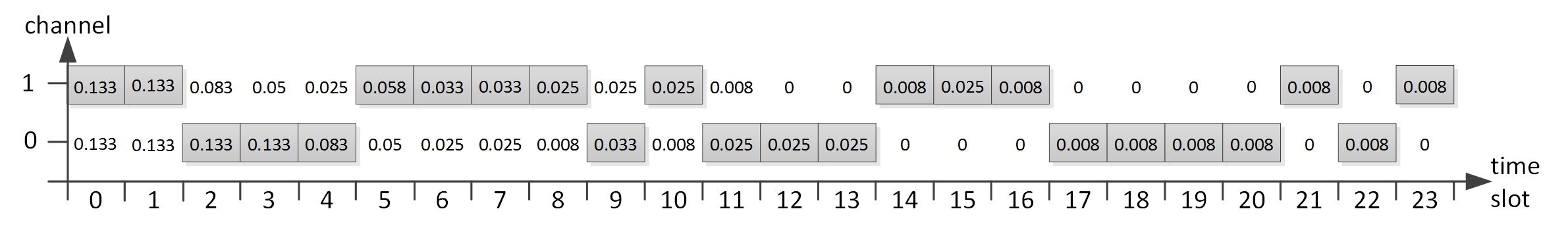}
				\label{fig:GREEDY_F2_non_MDT}
		} \\
		\subfloat[\ac{MDT}-optimal schedule]{
        \includegraphics[width=1.0\textwidth]{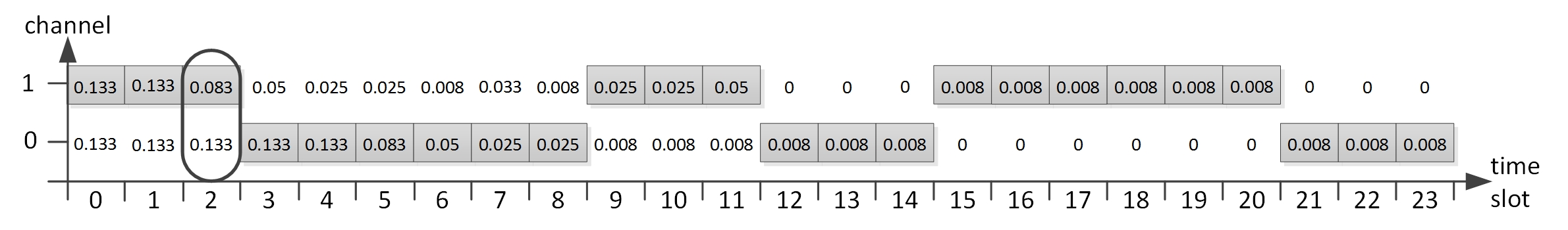}
				\label{fig:GREEDY_F2_non_MDT_GENOPT}
    }
\caption{Illustration for Example~\ref{exa:greedy_suboptimal} showing a schedule generated by \ALG{} for $B = \{2,3,4,6,12\} \in \mathbb{F}_2 \setminus \mathbb{F}_3$ and $|C| = 2$ that is \ac{WDT}-optimal but not \ac{MDT}-optimal.}
\label{fig:GREEDY_non_makespan_MDT_opt_F2}
\end{figure*}

\begin{figure*}
\centering
		\subfloat[CHAN TRAIN]{		
        \includegraphics[width=0.7\textwidth]{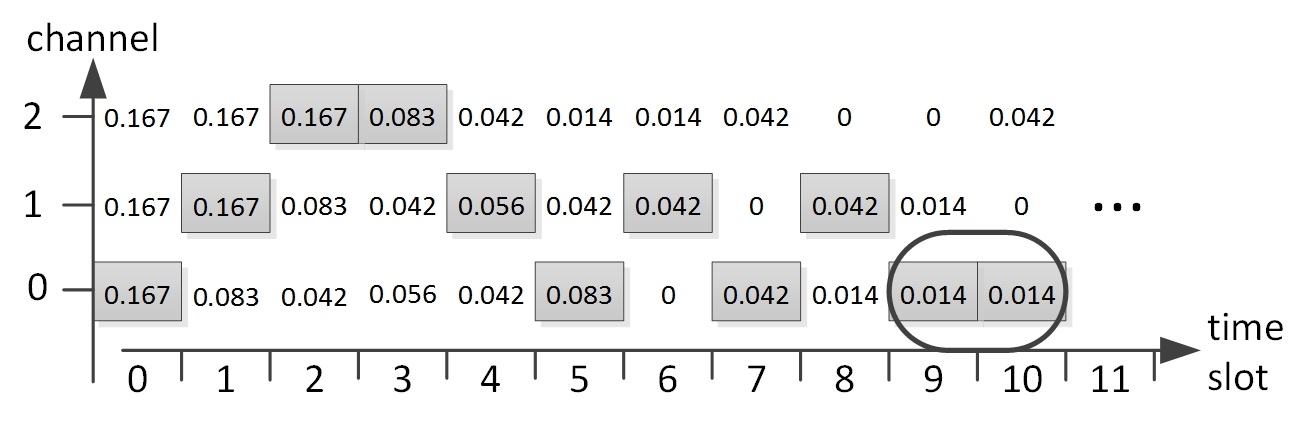}
				\label{fig:CHANTRAIN_F2}
		}\\
		\subfloat[GREEDY]{
        \includegraphics[width=0.7\textwidth]{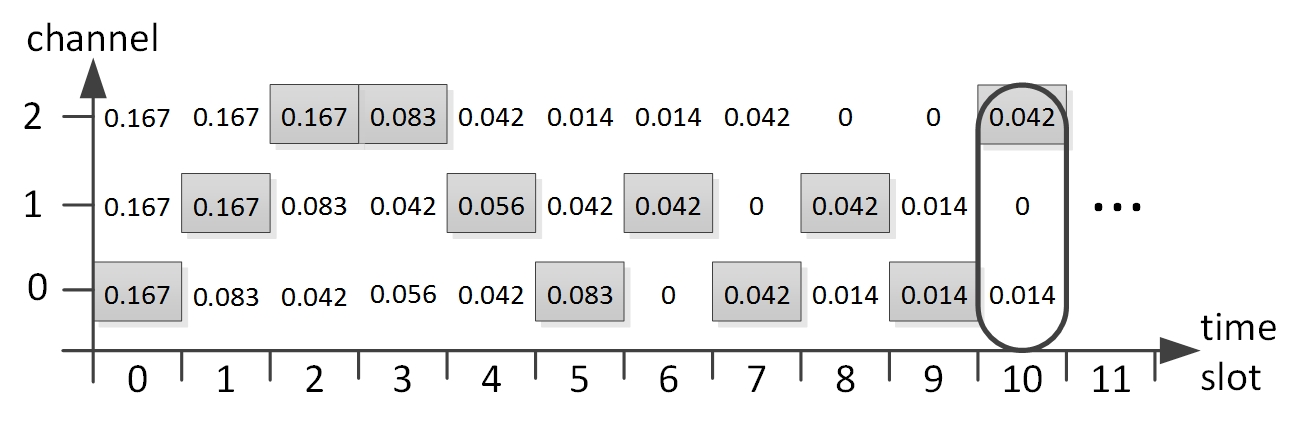}
				\label{fig:CHANTRAIN_GREEDY_F2}
    }
\caption{Illustration for Example~\ref{exa:chantrain_nongreedy} showing the non-greedy behavior of the CHAN TRAIN strategy for the BI set $B = \{1,2,3,6\} \in \mathbb{F}_2\setminus\mathbb{F}_3$ and $|C| = 3$.}
\label{fig:CHAINTRAIN_nonGreedyBehaviour}
\end{figure*}

\begin{figure}
\centering
		\subfloat[\ac{WDT} constrained by the optimum value $\max(B)  |C|$]{		
        \includegraphics[width=0.5\textwidth]{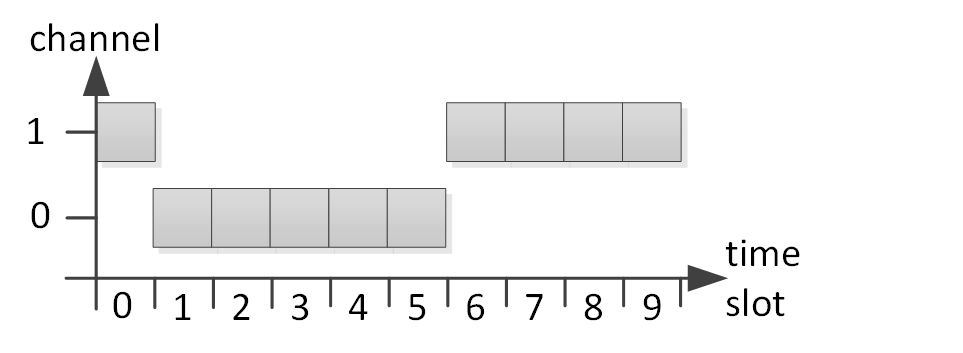}
				\label{fig:GENOPT_F1_makespan_tmax-optMakespan}
		} \\
		\subfloat[\ac{WDT} constrained by $LCM(B)  |C|$]{
        \includegraphics[width=0.5\textwidth]{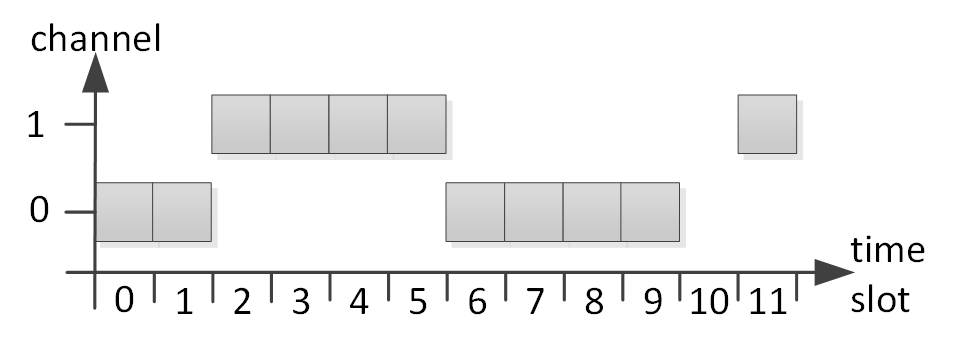}
				\label{fig:GENOPT_F1_makespan_tmax-lcm}
    }
\caption{Illustration for Example~\ref{exa:genopt_notmakespanoptimal}, with \ac{BP} set $B = \{1,2,4,5\} \in \mathbb{F}_1 \setminus \mathbb{F}_2$ and $|C| = 2$. In this example, no schedule exists which is both \ac{MDT}-optimal and \ac{WDT}-optimal.}
\label{fig:GENOPT_F1_non_makespan_opt}
\end{figure}

\section{Optimal \ac{WDT}}
\label{sec:opt_cdt}

In this section, we prove that the optimum duration of a complete discovery is $\max(B)|C|$ time slots for arbitrary \ac{BP} sets and channel sets. This allows us to define a \ac{WDT}-optimal listening schedule as a schedule which requires $\max(B)|C|$ for a complete discovery.

\begin{proposition}[\ac{WDT}-optimality]
\label{prop:cdt_optimality}
For an arbitrary set of \acp{BP} $B$, and a set of channels $C$, the optimum duration of a complete discovery is $\max(B)\left|C\right|$ time slots.
\end{proposition}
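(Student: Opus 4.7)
The plan is to prove matching lower and upper bounds of $\max(B)|C|$ time slots. For the lower bound, I would fix an arbitrary channel $c \in C$ and restrict attention to the $b_{\max}$ configurations $(c, b_{\max}, \delta)$ with $\delta \in \{0, 1, \ldots, b_{\max}-1\}$, where $b_{\max} = \max(B)$. A neighbor in configuration $(c, b_{\max}, \delta)$ emits beacons exactly at time instants congruent to $\delta$ modulo $b_{\max}$, so the listener can detect it only by tuning to $c$ at some slot $t$ with $t \equiv \delta \pmod{b_{\max}}$. Since the $b_{\max}$ offsets exhaust all residue classes modulo $b_{\max}$, discovering all of them on channel $c$ requires at least $b_{\max}$ distinct time slots spent on $c$. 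Because the listener scans exactly one channel per slot, summing this requirement across the $|C|$ pairwise disjoint channels yields a lower bound of $b_{\max}|C|$ slots.

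For the matching upper bound, I would exhibit a concrete schedule of length $b_{\max}|C|$. Enumerate the channels arbitrarily as $c_0, c_1, \ldots, c_{|C|-1}$ and have the listener scan channel $c_i$ during the $i$-th block of $b_{\max}$ consecutive slots, i.e. during $\{i\, b_{\max}, i\, b_{\max}+1, \ldots, (i+1)b_{\max}-1\}$. To verify completeness, consider an arbitrary configuration $(c_i, b, \delta)$ with $b \in B$ and $0 \leq \delta < b$. During the block assigned to $c_i$ the listener remains on $c_i$ for $b_{\max} \geq b$ consecutive slots; any such window contains a complete residue system modulo $b$, hence it includes some slot $t \equiv \delta \pmod{b}$, at which the beacon is intercepted. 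Thus the schedule discovers every possible configuration within $b_{\max}|C|$ slots.

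The proof is conceptually simple and the only potential obstacle is purely conceptual rather than technical: one has to recognize that both bounds hold for completely arbitrary $B$ and $C$, with no structural assumption on the BP set. The lower bound uses solely the configurations associated to $b_{\max}$ and is therefore insensitive to the rest of $B$; the upper bound construction exploits the fact that the block length $b_{\max}$ dominates every $b \in B$, which guarantees that a full period-$b$ window fits inside each per-channel block. Combining the two bounds concludes the proof of the proposition.
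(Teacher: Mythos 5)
Your proof is correct and follows essentially the same route as the paper: the identical block schedule (each channel scanned for $b_{\max}$ consecutive slots, with completeness following because a window of $b_{\max}\geq b$ consecutive slots contains a full residue system modulo $b$) for the upper bound, and a counting argument on the configurations with \ac{BP} $b_{\max}$ for the lower bound. The only cosmetic difference is that the paper's lower bound fixes the single offset $b_{\max}-1$ and observes that the $|C|$ required scan slots extend to index $b_{\max}|C|-1$, whereas you count all $b_{\max}$ offsets per channel and sum over the disjoint channels; both are the same pigeonhole idea.
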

\begin{proof}
Observe that the complete discovery can always be performed within $\max(B)|C|$ time slots by listening on each channel for $\max(B)$ consecutive time slots. More precisely, channel $c_j$, $j\in\{0,\ldots,|C|-1\}$ is scanned during the time slots $\{j\bmax,\ldots,(j+1)\bmax-1\}$, where $\bmax=\max(B)$. Thus, configuration $(c_j,b,\delta)$, $b\in B$, $\delta\in\{0,\ldots,b-1\}$, is discovered during the time slot $t=j\bmax+\delta$. Consequently, all configurations from $K_{BC}$ are discovered during the time slots $\{0,\ldots,\bmax|C|-1\}$.

To see that a complete discovery cannot be performed faster than $\max(B)|C|$, consider the configurations $\left\{\,(c,\bmax,\bmax-1)\mid c\in C\,\right\}\subseteq K_{BC}$. To discover them, it is necessary to scan at least the time slots $\left\{i\bmax+\bmax-1\right\}_{i\in\{0,\ldots,|C|-1\}}$, proving the claim.
\end{proof}

Consequently, we define a \ac{WDT}-optimal schedule as follows.

\begin{definition}[WDT-optimal schedule]
We call a listening schedule for a \ac{BP} set $B$ and a channel set $C$ \ac{WDT}-optimal if it discovers all neighbors within $\max(B)|C|$ time slots.
\end{definition}

Note that \ac{WDT}-optimal schedules do not have idle time slots as stated in the following proposition.

\begin{proposition}[WDT-optimality implies no idle slots]
\ac{WDT}-optimal schedules have no idle time slots.
\end{proposition}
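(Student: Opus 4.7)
The plan is to argue by a counting/pigeonhole argument applied to the ``hardest'' subset of configurations, namely those with the maximum \ac{BP} $\bmax = \max(B)$. Recall that $K_{BC}$ contains, for each channel $c\in C$ and each offset $\delta\in\{0,\ldots,\bmax-1\}$, the configuration $(c,\bmax,\delta)$. There are exactly $|C|\cdot\bmax$ such configurations, which is precisely the duration of a \ac{WDT}-optimal schedule by Proposition~\ref{prop:cdt_optimality}.

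Next I would bound the number of $\bmax$-configurations that can be discovered in a single active time slot. If at time $t$ the schedule scans some channel $c_t$, then the only configuration with \ac{BP} $\bmax$ discovered at that slot is $(c_t,\bmax,t \bmod \bmax)$; in particular, an active slot contributes at most one discovery to the $|C|\cdot\bmax$ configurations of interest, and an idle slot obviously contributes zero. This is the only real step that uses the structure of the discovery model, and it is essentially immediate.

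Finally I would combine these two observations: a \ac{WDT}-optimal schedule has exactly $|C|\cdot\bmax$ slots, must discover all $|C|\cdot\bmax$ configurations with \ac{BP} $\bmax$, and can discover at most one per active slot. Hence every slot must be active, proving the proposition. There is no real obstacle here; the only subtlety is noticing that it suffices to restrict attention to the configurations with $b=\bmax$ to make the pigeonhole count tight.
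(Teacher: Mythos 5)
Your argument is correct and rests on the same idea as the paper's proof: counting configurations with \ac{BP} $\bmax$ against the available slots, using that an active slot can detect at most one such configuration. The only difference is granularity -- the paper applies the pigeonhole locally to the $|C|$ configurations sharing the offset $t\bmod\bmax$ of the assumed idle slot, while you apply it globally to all $\bmax|C|$ configurations with \ac{BP} $\bmax$; both are valid and essentially equivalent.
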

\begin{proof}
Assume that a time slot $t\in\{0,\ldots,\bmax|C|-1\}$, where $\bmax=\max(B)$, $B$ is the set of \acp{BP}, and $C$ is the set of channels, is idle. Then, since the schedule is \ac{WDT}-optimal, there remain $|C|-1$ time slots $\left\{\,i\bmax+t\bmod\bmax\mid i\in\{0,\ldots,|C|-1\}\right\}\setminus\{t\}$, during which a configuration from the set $\left\{\,(c,\bmax,t\bmod \bmax)\mid c\in C\,\right\}$ can be discovered. Since the latter set of configurations has size $|C|$, we obtain a contradiction.
\end{proof}

\section{Recursive Schedules}
\label{app:recursive}

Recursive schedules have a number of compelling properties that are stated in the following proposition.

\begin{proposition}[Properties of recursive schedules]
\label{prop:recursive_optimal}
A recursive schedule for a \ac{BP} set $B\subset\mathbb{N}^+$ is complete, \ac{WDT}-optimal, \ac{MDT}-optimal, and \ac{NDoT}-optimal.
\end{proposition}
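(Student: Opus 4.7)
The plan is to establish the four optimality properties by first unpacking the defining feature of recursive schedules, namely that they pointwise maximize the CDF $F(t)$ of discovery times over the space of all feasible listening schedules, and then to reduce each claim to a statement about $F$.

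First I would verify completeness and \ac{WDT}-optimality together. By Proposition~\ref{prop:cdt_optimality} there exists a complete schedule with horizon $\max(B)|C|$ (the ``$\max(B)$ consecutive slots per channel'' construction). For this baseline schedule, $F(\max(B)|C|-1)=1$. Since a recursive schedule pointwise dominates the CDF of every feasible schedule, it must also satisfy $F(\max(B)|C|-1)=1$, which simultaneously gives completeness (all configurations are discovered) and \ac{WDT}-optimality (the horizon $\max(B)|C|$ matches the lower bound from Proposition~\ref{prop:cdt_optimality}).

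Next I would prove \ac{MDT}-optimality. Writing the mean discovery time as
\begin{equation*}
\mathrm{MDT} \;=\; \sum_{t\ge 0} \bigl(1 - F(t)\bigr),
\end{equation*}
one sees that any schedule whose CDF dominates that of another schedule at every $t$ must have a no-larger MDT. Since a recursive schedule pointwise maximizes $F(t)$ over all feasible schedules, it minimizes the MDT, so it is \ac{MDT}-optimal. For \ac{NDoT}-optimality I would use the fact that the expected number of neighbors discovered by time $t$ is proportional to $F(t)$ (under the uniform configuration assumptions of the paper, the NDoT at time $t$ equals $|K_{BC}|\cdot F(t)$ up to the normalization of the distribution over configurations). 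Pointwise maximization of $F(t)$ therefore pointwise maximizes the NDoT, giving \ac{NDoT}-optimality for every $t$ simultaneously.

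The main obstacle I expect is the justification of the pointwise CDF-domination property itself: one needs to argue that the recursive construction, slot by slot, picks a (channel, time) combination that maximizes the number of newly discoverable configurations, and that this greedy-on-the-CDF property really does yield the pointwise maximum CDF across the entire time axis rather than just at individual slots. I would handle this by induction on $t$, using the recursive decomposition of the configuration set $K_{BC}$ along the largest \ac{BP}: at each step, the residual subproblem on the remaining configurations is again of the same form, so the pointwise optimal choice at slot $t$ is consistent with pointwise optimal choices at all subsequent slots, which is exactly the definition of a recursive schedule. Once this domination is in hand, the three optimality claims follow from the monotone functionals above with essentially no extra work.
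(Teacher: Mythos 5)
There is a genuine gap: you have taken ``pointwise maximizes the CDF of discovery times over all feasible schedules'' as the \emph{definition} of a recursive schedule, but that is essentially the property the proposition asks you to prove, not its hypothesis. The paper defines a recursive schedule structurally (Definition~1 of the companion paper): for every $b\in B$, all configurations with beacon period $b$ are discovered within the first $b|C|$ time slots. From that definition, completeness and WDT-optimality are immediate (take $b=\max(B)$), and the entire content of the proposition is the implication ``structural recursiveness $\Rightarrow$ pointwise CDF maximization,'' from which MDT- and NDoT-optimality then follow exactly as you describe. In your write-up this implication is deferred to the final paragraph and only sketched, so the reductions in your first two paragraphs are circular as stated.

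Moreover, the sketch you give for closing the gap does not work as an argument. An induction on $t$ claiming that ``the pointwise optimal choice at slot $t$ is consistent with pointwise optimal choices at all subsequent slots'' is precisely what fails in general---Example~\ref{exa:recursive} shows that a pointwise CDF maximizer need not exist at all, and Example~\ref{exa:greedy_suboptimal} shows that slot-by-slot greedy choices need not be globally optimal---so no generic exchange or induction argument can carry this step; you must use the recursive structure quantitatively. The paper's actual argument is a counting argument: for fixed $b\in B$ and offset $\delta$, the $|C|$ configurations $\left\{\,(c,b,\delta)\mid c\in C\,\right\}$ can only be detected in the $|C|$ slots $\left\{\,ib+\delta\mid i\in\{0,\ldots,|C|-1\}\,\right\}$; since a recursive schedule detects all $b|C|$ configurations with period $b$ within the first $b|C|$ slots and at most one of them per slot, it must scan a distinct channel in each of those $|C|$ slots, hence every scan $(c,t)$ with $t<b|C|$ detects $(c,b,t\bmod b)$. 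This shows that for every $b$ and every $t$ the number of period-$b$ configurations detected by time $t$ is the maximum achievable by any schedule, and under the uniformity assumptions this yields the pointwise CDF domination. Your reductions from that domination to completeness, WDT-optimality, MDT-optimality (via $\mathrm{MDT}=\sum_{t\ge 0}\bigl(1-F(t)\bigr)$), and NDoT-optimality are fine once this key step is in place.
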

\begin{proof}
Note that recursive schedules are complete and \ac{WDT}-optimal by definition (see Definition 1 in~\cite{Karowski18}).

Further, note that a schedule $\mathcal{L}$ that maximizes the \ac{NDoT} is also \ac{MDT}-optimal since it pointwise maximizes the \ac{CDF} of discovery times
\begin{align*}
\sum_{\kappa \in K_{BC}: \mathcal{T}_\kappa(\mathcal{L}) \leq t} P_{\kappa} \geq \sum_{\kappa \in K_{BC}: \mathcal{T}_\kappa(\mathcal{L'}) \leq t} P_{\kappa} \\
\text{ for all } \mathcal{L'}\subset C\times\mathbb{N}, t \in \mathbb{N} \,.
\end{align*}
Consequently, it is sufficient to show \ac{NDoT}-optimality. To prove it, we show that for a recursive schedule $\mathcal{L}$, each scan $(c,t)\in\mathcal{L}$ with $t\in\{0,\ldots,b|C|-1\}$ results in the discovery of the configuration $(c,b,t\bmod b)$. This implies that for each $b\in B$, the number of configurations detected until each time slot is maximized since it is not possible to discover more than one configuration with a certain \ac{BP} $b\in B$ per time slot. Due to the assumption that the channels and the beacon offsets for each \ac{BP} are uniformly distributed, it also implies that the sum of the discovery probabilities is maximized in each time slot, proving the claim.

Observe that for a $t\in\{0,\ldots,b|C|-1\}$, there are $|C|$ configurations with a \ac{BP} $b$ and an offset $t\bmod b$, namely the configurations $\left\{\,(c,b,t\bmod b)\mid c\in C\,\right\}$. At the same time, within the first $b|C|$ time slots there are exactly $|C|$ time slots where these configurations can be discovered, namely the time slots $\left\{\,ib+t\bmod b\mid i\in\{0,\ldots,|C|-1\}\,\right\}$. Consequently, since a recursive schedule per definition detects all configurations with the \ac{BP} $b$ during the first $b|C|$ time slots, it has to scan a different channel during each of the time slots $\left\{\,ib+t\bmod b\mid i\in\{0,\ldots,|C|-1\}\,\right\}$. It is thus detecting one configuration with the \ac{BP} $b$ and offset $t\bmod b$ in each of these time slots.
\end{proof}

Unfortunately, not each scenario admits a recursive schedule, see Example~\ref{exa:recursive} in Section~\ref{sec:examples}. However, for \ac{BP} sets from $\mathbb{F}_3$ recursive schedules always exist, as shown by the following proposition.

\begin{proposition}
\label{prop:recursive_F3}
For a \ac{BP} set $B \in \mathbb{F}_3$ a recursive schedule always exists.
\end{proposition}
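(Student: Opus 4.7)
The plan is to prove existence by induction on $|B|$, exploiting the fact that $B \in \mathbb{F}_3$ means the elements form a divisibility chain $b_1 \mid b_2 \mid \cdots \mid b_n$. For the base case $n = 1$, the schedule from the proof of Proposition~\ref{prop:cdt_optimality}---scanning channel $c_j$ during the slots $\{j b_1, \ldots, (j+1) b_1 - 1\}$---is complete, \ac{WDT}-optimal, and trivially recursive since only a single \ac{BP} is present, so all configurations with \ac{BP} $b_1$ are discovered in the first $b_1 |C|$ slots.

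For the inductive step, assume a recursive schedule $\mathcal{L}'$ of length $b_{n-1} |C|$ exists for $B' = B \setminus \{b_n\}$. Because $b_{n-1} \mid b_n$, set $k = b_n / b_{n-1} \in \mathbb{N}^+$. I would construct $\mathcal{L}$ of length $b_n |C| = k \cdot b_{n-1} |C|$ by concatenating $k$ blocks of length $b_{n-1} |C|$, where the $m$-th block ($m = 0, \ldots, k-1$) is a copy of $\mathcal{L}'$ with a channel permutation $\pi_m \in \mathrm{Sym}(C)$ applied, and $\pi_0 = \mathrm{id}$. The recursive conditions for each smaller $b_i$ with $i < n$ are then inherited automatically: the first $b_i |C| \leq b_{n-1} |C|$ slots of $\mathcal{L}$ coincide with the first $b_i |C|$ slots of $\mathcal{L}'$, which is recursive for $B'$ by the induction hypothesis.

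The hard part is selecting the $\pi_m$ so that the condition for $b_n$ holds, namely that each channel $c \in C$ is scanned at $b_n$ times whose residues modulo $b_n$ cover $\{0, \ldots, b_n - 1\}$ without repetition. Since $b_{n-1} \mid b_n$, the times at which $c$ is scanned in block $m$ contribute $b_{n-1}$ specific residues modulo $b_n$---namely a shift, by $(m\, b_{n-1} |C|) \bmod b_n$, of the offsets of $\mathcal{L}'$ on the channel $\pi_m^{-1}(c)$---and summing over all $k$ blocks must partition $\{0, \ldots, b_n - 1\}$ per channel. I would cast this as a system of distinct representatives on a bipartite graph between the $|C|$ channels and the $k$ residue-cosets of size $b_{n-1}$ in $\mathbb{Z}/b_n\mathbb{Z}$, and apply Hall's marriage theorem (or equivalently an explicit cyclic construction on the orbit structure induced by the chain) to obtain a valid family $(\pi_m)_{m=0}^{k-1}$. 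Verifying that this bipartite graph is $k$-regular---that in any block a suitable $\pi_m$ can independently steer each channel to any designated coset---is the delicate step, and it is precisely where the chain property defining $\mathbb{F}_3$ is essential: without full divisibility of each $b_i$ into $b_n$ the cosets would not align, as Example~\ref{exa:recursive} demonstrates for $B = \{1,2,3\} \notin \mathbb{F}_3$.
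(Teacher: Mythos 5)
Your induction skeleton and base case coincide with the paper's, but the inductive step both diverges from the paper's argument and contains a genuine gap at exactly the point you flag as ``delicate.'' The existence of a suitable family of permutations $(\pi_m)$ is the entire content of the inductive step, and it is not established; worse, the combinatorial model you propose for it does not match the actual constraint. Within block $m$, the scans of a single source channel $c$ land at residues $\bigl((m|C|+q_c(r))\bmod k\bigr)b_{n-1}+r$ for $r\in\{0,\ldots,b_{n-1}-1\}$, where $q_c(r)b_{n-1}+r$ is the unique scan time of $c$ in $\mathcal{L}'$ congruent to $r$ modulo $b_{n-1}$. Since $q_c(r)$ generally depends on $r$, this residue set is \emph{not} a coset of $b_{n-1}\mathbb{Z}/b_n\mathbb{Z}$: for instance, for $B'=\{1,2\}$, $|C|=2$, the recursive schedule scanning channels $0,1,1,0$ puts channel $0$ at slots $0$ and $3$, which for $b_n=4$ gives residues $\{0,3\}$, straddling both cosets $\{0,2\}$ and $\{1,3\}$. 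Consequently there is no well-defined bipartite graph between channels and cosets on which to run Hall's theorem; the real requirement is that a \emph{single} permutation $\pi_m$ per block simultaneously satisfies one covering condition for each of the $b_{n-1}$ offset classes $r$, and for each target channel, and you give no argument (nor a correct regularity claim) that this system is always solvable. Until that is proved, the inductive step is open.

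The paper avoids this matching problem altogether by abandoning the rigid block structure. It encodes a complete \ac{WDT}-optimal schedule as a matrix $A\in\mathbb{N}^{|C|\times b_{\max}}$ with $A_{c\delta}$ the super-slot index of the scan catching offset $\delta$ on channel $c$, and re-indexes each scan time $A'_{c\delta'}b'_{\max}+\delta'$ of the schedule for $B'$ with respect to the larger period $b''_{\max}=\alpha b'_{\max}$. Because the scan times themselves are preserved and the map $\delta'\mapsto\delta''$ is injective, the transferred entries automatically sit in distinct columns with distinct super-slot indices per column, and the remaining cells of each column can then be filled arbitrarily with the unused indices. This keeps the first $b'_{\max}|C|$ slots literally unchanged (giving recursiveness for all $b\in B'$) and populates every $(c,\delta'')$ cell (giving recursiveness for $b''_{\max}$), with no combinatorial existence argument needed. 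If you want to salvage your block-concatenation route, you would need to either prove solvability of the per-offset-class covering system directly or relax the blocks to permit rearranging individual scans, at which point you essentially recover the paper's construction.
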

\begin{proof}
We prove the claim by induction. First, we show that a recursive schedule exists for a \ac{BP} set containing just one element. Then, we show that a recursive schedule for a \ac{BP} set $B'\in\mathbb{F}_3$ can be extended to a recursive schedule for a \ac{BP} set $B''\in\mathbb{F}_3$, with $B'\subset B''$, and $|B''|=|B'|+1$.

To see that a recursive schedule for a \ac{BP} set $B=\{b\}$ exists, consider the schedule that scans channel $c_j$, $j\in\{0,\ldots,|C|-1\}$, during the time slots $\{jb,\ldots,(j+1)b-1\}$. This schedule discovers all configurations during the time slots $\{0,\ldots,|C|b-1\}$, and is thus recursive.

Now, assume that we are given a recursive schedule for a \ac{BP} set $B'$. We show that it can be extended to a recursive schedule for $B''$. We denote $\bmax'=\max(B')$ and $\bmax''=\max(B'')$. W.l.o.g. we assume $b''>b'$. Observe that $B''\in\mathbb{F}_3$ implies $b''=\alpha b'$, for an $\alpha\in\mathbb{N}_+$. 

Observe that any complete and \ac{WDT}-optimal schedule for a \ac{BP} set $B$ can be written in form of a matrix $A\in\mathbb{N}^{|C|\times b_{\max}}$, $\bmax=\max(B)$, where each element $A_{c\delta}$ indicates that channel $c$ is scanned during the time slot $A_{c\delta}b_{\max}+\delta$, and thus the configuration $(c,\bmax,\delta)$ is discovered during this time slot. For a matrix $A$ to be a valid representation of a complete schedule, it must hold $A_{c\delta}\neq A_{c'\delta}$, for $c\neq c'$, $\delta\in\{0,\ldots,\bmax-1\}$, since otherwise two different channels would have to be scanned during the same time slot. Furthermore, it must hold $A_{c\delta}\in\{0,\ldots,|C|-1\}$ since a \ac{WDT}-optimal schedule only uses time slots $t\in\{0,\ldots,b_{\max}|C|-1\}$.

Assume that the given recursive schedule for the \ac{BP} set $B'$ is represented by the matrix $A'\in\mathbb{N}^{|C|\times b'_{\max}}$. We need to show that $A'$ can be transformed to a new matrix $A''\in\mathbb{N}^{|C|\times b''_{\max}}$, such that $A''$ is a valid representation of a complete and \ac{WDT}-optimal schedule, and such that the channels scanned during the time slots $t\in\{0,\ldots,\bmax'|C|-1\}$ by the schedule $A'$ are also scanned during the same time slots by the schedule $A''$.

We claim that such a transformation is obtained by mapping each element $A'_{c\delta'}$ to an element $A''_{c\delta''}$ as follows:
\begin{align*}
\begin{cases}
\delta''&=(A'_{c\delta'}b'_{\max}+\delta')\bmod b''_{\max}\\
A''_{c\delta''}&=\frac{(A'_{c\delta'}b'_{\max}+\delta')-\delta''}{b''_{\max}}
\end{cases}\,.
\end{align*}
Observe that since the numerator is an integer multiple of the denominator, the expression on the right-hand side evaluates to an integer.

This mapping retains the channels scanned by $A'$ since
\begin{align*}
A''_{c\delta''}b''_{\max}+\delta''&=(A'_{c\delta'}b'_{\max}+\delta')-\delta''+\delta''\\
&=A'_{c\delta'}b'_{\max}+\delta'\,.
\end{align*}

Moreover, this mapping is injective, such that no two elements from $A'$ are mapped to the same element of $A''$. This is due to the fact that $\delta''=(A'_{c\delta'}b'_{\max}+\delta')\bmod b''_{\max}$ is an injective function of $\delta'$, since $\bmax''=\alpha\bmax'$, $\alpha\in\mathbb{N}_+\setminus\{1\}$.

To show that $A''_{c\delta''}<|C|$, we need to show that $A'_{c\delta'}b'_{\max}+\delta'<|C|b''_{\max}$. 
Observe that
\begin{align*}
A'_{c\delta'}b'_{\max}+\delta'&<|C|b'_{\max}+\delta'\\
&<|C|b'_{\max}+b'_{\max}=(|C|+1)b'_{\max}\\
&\leq|C|\alpha b'_{\max}\\
&<|C|b''_{\max}\,.
\end{align*}

Finally, we have to show that $A''_{c\delta''}\neq A''_{c'\delta''}$ for $c\neq c'$. To see that, observe that if two elements from the matrix $A'$ are transformed onto the same column in the matrix $A''$, they must have been in the same column already in $A'$, since, as already mentioned earlier, $\delta''=(A'_{c\delta'}b'_{\max}+\delta')\bmod b''_{\max}$ is an injective function of $\delta'$. Consequently, for their preimages must hold $A'_{c\delta'}\neq A'_{c'\delta'}$. But then we obtain
\begin{align*}
A'_{c\delta'}&\neq A'_{c'\delta'}\\
\Leftrightarrow\quad\frac{(A'_{c\delta'}b'_{\max}+\delta')-\delta''}{b''_{\max}}&\neq\frac{(A'_{c'\delta'}b'_{\max}+\delta')-\delta''}{b''_{\max}}\\
A''_{c\delta''}&\neq A''_{c'\delta''}\,.
\end{align*}

Finally, the remaining empty rows in each column of $A''$ can be filled arbitrarily, but such that for the added elements the two properties of a matrix representation still hold: $A''_{c\delta''}\neq A''_{c'\delta''}$ for $c\neq c'$, and $A''_{c\delta''}\in\{0,\ldots,|C|-1\}$.
\end{proof}

\section{Performance Analysis for \ALG{} Algorithms}
\label{sec:optim_greedy}

In this section, we formulate and prove two results on the optimality of the \ALG{} algorithms. The first claim that we prove is that \ALG{} algorithms generate listening schedules which are \ac{WDT}-optimal (and consequently also complete) for \ac{BP} sets from $\mathbb{F}_2$. This result is stated in the following proposition.

\begin{proposition}
\label{prop:greedy_f2}
For a \ac{BP} set $B\in\mathbb{F}_2$, schedules generated by \ALG{} are \ac{WDT}-optimal.
\end{proposition}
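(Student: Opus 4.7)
The plan is to prove WDT-optimality by establishing, via induction on $t$, the invariant that for every $t\in\{0,\ldots,\bmax|C|-1\}$ (with $\bmax=\max(B)$), the scan selected by \ALG{} at time $t$ uses a channel which has not been scanned in any earlier slot sharing the same offset modulo $\bmax$. Completeness within $\bmax|C|$ slots, and hence WDT-optimality, will then follow by a direct counting argument.

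The key ingredient exploiting $B\in\mathbb{F}_2$ is the divisibility property: every $b\in B$ divides $\bmax$. If a channel $c$ has already been scanned at some $t'<t$ with $t'\equiv t\pmod{\bmax}$, then $t'\bmod b=t\bmod b$ for every $b\in B$, so \emph{all} configurations $(c,b,t\bmod b)$ were discovered at $t'$. Rescanning $c$ at $t$ therefore contributes zero to \ALG{}'s expected-discovery score. Conversely, any channel $c$ that has not yet been scanned at offset $t\bmod\bmax$ still has the configuration $(c,\bmax,t\bmod\bmax)$ undiscovered, giving it a score of at least $1/(\bmax|B||C|)>0$. Hence \ALG{} strictly prefers some channel not yet used at the current offset, provided at least one such channel exists.

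For the inductive step, the hypothesis guarantees that the fewer-than-$|C|$ earlier slots sharing the current offset modulo $\bmax$ have all used distinct channels, so at least one channel is still free; by the previous observation \ALG{} selects one of these. The base case $t=0$ is immediate. To conclude, the invariant ensures that within $\{0,\ldots,\bmax|C|-1\}$ all $|C|$ channels appear at some slot of each offset $\delta\in\{0,\ldots,\bmax-1\}$, directly covering every configuration $(c,\bmax,\delta)$. For any $b\in B$ with $b<\bmax$ and any $\delta'\in\{0,\ldots,b-1\}$, the $\bmax/b$ offset classes modulo $\bmax$ which reduce to $\delta'$ modulo $b$ together exhibit each channel at some slot of offset $\delta'$ modulo $b$, discovering every $(c,b,\delta')$.

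The hard part is justifying that \ALG{} never wastes a slot by rescanning the same channel at the same offset modulo $\bmax$. This is precisely where the $\mathbb{F}_2$ hypothesis becomes essential: without divisibility, such a rescan could still yield fresh discoveries for BPs not dividing $\bmax$, so \ALG{}'s score-maximizing rule alone would no longer force the invariant, and a more delicate exchange argument would be required (as Example~\ref{exa:greedy_suboptimal} illustrates, the invariant can genuinely fail outside $\mathbb{F}_2$).
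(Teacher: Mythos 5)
Your proof is correct and follows essentially the same route as the paper's: the dichotomy that a scan in slot $t$ either discovers $(c,\max(B),t\bmod\max(B))$ or discovers nothing (via the divisibility property of $\mathbb{F}_2$), the counting argument that a fresh channel always exists among the fewer-than-$|C|$ slots sharing an offset class, and the reduction of smaller \acp{BP} to the largest one. Recasting it as an explicit induction on $t$ with an invariant is only a presentational difference; the substance matches the paper's proof.
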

\begin{proof}
The idea for the proof is as follows. We first show that scanning a channel $c\in C$ in a time slot $t\in\{0,\ldots,\bmax|C|-1\}$ either results in the discovery of at least the configuration $(c,\bmax,t \bmod \bmax)$, or it results in the discovery of 0 configurations. We then show that this implies that a \ALG{} algorithm discovers all configurations $\left\{\,(c,\bmax,\delta)\mid c\in C,\,\delta\in\{0,\ldots,\bmax-1\}\,\right\}$ during the first $\bmax|C|$ time slots. Finally, we show that in $\mathbb{F}_2$, this implies that all configurations are discovered.

Assume that scanning a channel $c\in C$ in a time slot $t\in\{0,\ldots,\bmax|C|-1\}$ does not result in the discovery of the configuration $(c,\bmax,t \bmod \bmax)$. This implies that channel $c$ has already been scanned during a time slot $t'\in\left\{i\bmax+t\bmod\bmax\right\}_{i\in\{0,\ldots,\floor{t/\bmax}-1\}}$, so that the configuration $(c,\bmax,t \bmod \bmax)$ has been discovered prior to the time slot $t$. However, this would imply that all other configurations $\left\{\,(c,b,t\bmod b)\mid b\in B\setminus\{\bmax\}\,\right\}$ have been discovered prior to the time slot $t$ as well. This is because they also send their beacons during the time slot $t'$, since $\bmax$ is an integer multiple of any $b \in B$. Consequently, if scanning a channel $c\in C$ in a time slot $t\in\{0,\ldots,\bmax|C|-1\}$ does not result in the discovery of the configuration $(c,\bmax,t \bmod \bmax)$, it does not result in any discoveries.

At the same time, for each time slot $t\in\{0,\ldots,\bmax|C|-1\}$ there is at least one channel $c\in C$ that results in the discovery of a configuration $(c,\bmax,t\bmod \bmax)$. If this would not be the case, all $|C|$ configurations $\left\{\,(c,\bmax,t\bmod\bmax)\mid c\in C\,\right\}$ would have had to be discovered prior to time slot $t$, that is, during one of the time slots $\left\{i\bmax+t\bmod\bmax\right\}_{i\in\{0,\ldots,\floor{t/\bmax}-1\}}$. The number of such time slots, however, is strictly less than $|C|$, leading to a contradiction.

Thus, since a \ALG{} algorithm maximizes the expected number of discoveries in each time slot, it would not select a channel that results in 0 discoveries if it can select a channel that results in at least one discoverable configuration. Consequently, in each time slot $t$ it discovers one configuration from the set $\left\{\,(c,\bmax,t \bmod \bmax)\mid c\in C\,\right\}$. Since this set contains $\bmax|C|$ configurations, it implies that it takes $\bmax|C|$ time slots in order to discover all of them.

Note, however, that in $\mathbb{F}_2$, a configuration $(c,b,\delta)$, $c\in C$, $b\in B$, $\delta\in\{0,\ldots,b-1\}$, is discovered if the configuration $(c,\bmax,\delta)$ is discovered, since in each time slot during which the configuration $(c,\bmax,\delta)$ sends its beacons, also the configuration $(c,b,\delta)$ sends its beacons.
\end{proof}

We remark that the assumption that the considered \ac{BP} set is in $\mathbb{F}_2$ is crucial to prove the \ac{WDT}-optimality, as shown by a counterexample in Example~\ref{exa:greedy_suboptimal}.

For the family of \ac{BP} sets $\mathbb{F}_3\subset\mathbb{F}_2$ we show the much stronger result that schedules generated by \ALG{} are not only complete and \ac{WDT}-optimal, but that they are recursive, and thus also \ac{MDT}-optimal and \ac{NDoT}-optimal. This result is formulated in the following proposition.

\begin{proposition}
\label{prop:greedy_f3}
For \ac{BP} sets from $\mathbb{F}_3$, a schedule is \ALG{} if and only if it is recursive. Consequently, for \ac{BP} sets $B\in\mathbb{F}_3$ \ALG{} schedules are complete, \ac{WDT}-optimal, \ac{MDT}-optimal, and \ac{NDoT}-optimal.
\end{proposition}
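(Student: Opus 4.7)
My plan is to establish the equivalence in both directions and then read off the four optimality conclusions from Proposition~\ref{prop:recursive_optimal}. Throughout I would use the elementary upper bound that a single scan discovers at most one configuration per \ac{BP}, so the number of new discoveries in slot $t$ is bounded above by $n_t$, the number of \acp{BP} $b\in B$ satisfying $t<b|C|$.

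The easier direction, \emph{recursive $\Rightarrow$ \ALG{}}, follows almost directly from the proof of Proposition~\ref{prop:recursive_optimal}: that argument already shows that in every slot $t\in\{0,\ldots,b|C|-1\}$ a recursive schedule discovers $(c,b,t\bmod b)$ for each applicable $b\in B$, reaching the upper bound $n_t$. Under the assumed uniform distribution of channels and offsets, maximising the count per slot is equivalent to maximising the expected number of discoveries per slot, so a recursive schedule is a valid execution of \ALG{} (under an appropriate tie-breaking rule).

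For the converse direction, \emph{\ALG{} $\Rightarrow$ recursive}, I would induct on the slot index $t$ with the invariant that the first $t$ scans of the \ALG{} schedule form the prefix of some recursive schedule for $B$. The base case is vacuous. For the inductive step, Proposition~\ref{prop:recursive_F3}, applied to the current state through the matrix-extension construction in its proof, yields a recursive continuation of the prefix, so some channel choice in slot $t$ attains $n_t$ new discoveries. By the per-\ac{BP} upper bound this number is also the optimum, hence \ALG{} must match it, and the committed scan $(c_t,t)$ must discover $(c_t,b,t\bmod b)$ for every applicable $b$. Once $t=\max(B)|C|$, all configurations are accounted for and the resulting schedule is recursive by definition. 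The four optimality conclusions then follow immediately from Proposition~\ref{prop:recursive_optimal}.

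The hard part will be the inductive step: I must argue not merely that \emph{some} greedy choice extends to a recursive schedule, but that \emph{every} greedy choice does. My plan is to adapt the matrix-extension argument of Proposition~\ref{prop:recursive_F3}, which relies critically on the divisibility chain defining $\mathbb{F}_3$, and to show that the entry forced into the schedule matrix by an arbitrary greedy scan in slot $t$ never blocks the extension to a full recursive schedule.
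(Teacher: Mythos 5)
Your first direction (recursive $\Rightarrow$ \ALG{}) and the final appeal to Proposition~\ref{prop:recursive_optimal} match the paper. The gap is in the converse direction: you correctly identify that the whole difficulty is to show that \emph{every} greedy choice in slot $t$ leaves a prefix that still extends to a recursive schedule, but you never prove it --- you only announce a plan to ``adapt the matrix-extension argument.'' That claim is not supplied by Proposition~\ref{prop:recursive_F3}: its induction extends a recursive schedule from a \ac{BP} set $B'$ to a superset $B''$, i.e., it is an induction over \ac{BP} sets, not over time-slot prefixes, so it hands you neither half of your inductive step. In particular, even the existence of \emph{some} channel attaining $n_t$ new discoveries given an arbitrary greedy history, and a fortiori the statement that every expected-value-maximizing channel keeps the prefix completable, require a genuine combinatorial argument (for each $b\in B$ and each residue $\delta$, the $|C|$ slots $ib+\delta$, $i\in\{0,\ldots,|C|-1\}$, must end up scanning a permutation of the channels, simultaneously for all $b$ in the divisibility chain). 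As written, the central content of the proposition is missing.

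The paper avoids your invariant entirely and argues by contradiction with a pigeonhole-plus-exchange step: if \ALG{} discovers some configuration $(c,b,\delta)$ only at a slot $\geq b|C|$, then channel $c$ was never scanned in the $|C|$ slots $\left\{ib+\delta\right\}_{i\in\{0,\ldots,|C|-1\}}$, so some other channel $c'$ was scanned in two of them, $t'<t''$; by the divisibility structure of $\mathbb{F}_3$, at $t''$ channel $c'$ can yield at most configurations with \acp{BP} strictly larger than $b$ (everything on $c'$ with a \ac{BP} dividing $b$ was already caught at $t'$), whereas channel $c$ would have yielded at least one configuration for each of those \acp{BP} plus $(c,b,t''\bmod b)$, contradicting greediness at slot $t''$. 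If you want to salvage your induction instead, you must actually prove the completability of every greedy prefix; the exchange argument above is the cleaner and complete route.
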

\begin{proof}
The claim that a recursive schedule is \ALG{} for \ac{BP} sets from $\mathbb{F}_3$ follows from Proposition~\ref{prop:recursive_optimal}, demonstrating that recursive schedules maximize the number of configurations detected until each time slot. 

It remains to show that a \ALG{} schedule is recursive. More precisely, we need to show that maximizing the expected number of discoveries in each time slot leads to discovering all configurations $(c,b,\delta)$, $\delta\in\{0,\ldots,b-1\}$, during the time slots $t\in\{0,\ldots,b|C|-1\}$.

Assume that this is not the case and that a configuration $(c,b,\delta)$, $c\in C$, $b\in B$, $\delta\in\{0,\ldots,b-1\}$, is discovered in a time slot $T_{(c,b,\delta)}\geq b|C|$. This implies that channel $c$ was not scanned during any of the time slots $\left\{\,ib+\delta\mid i\in\{0,\ldots,b-1\}\,\right\}$. Since there are $|C|$ such slots, this implies that there exists a different channel $c'\in C$ which was scanned during at least two of these time slots: $t',t''\in\left\{\,ib+\delta\mid i\in\{0,\ldots,b-1\}\,\right\}$. W.l.o.g., assume $t'<t''$. 

Consider the two subsets $B',B''\subseteq B$, with $B'=\left\{\,b'\in B\mid b'\leq b\,\right\}$, and $B''=\left\{\,b''\in B\mid b''> b\,\right\}$. Observe that by scanning channel $c$ during the time slot $t''$ at least the configurations $\left\{\,(c,b'',t''\bmod b'')\mid b''\in B''\,\right\}$ and the configuration $(c,b,t''\bmod b)$ are detected. However, by scanning channel $c'$ during the time slot $t''$, at most the configurations $\left\{\,(c',b'',t''\bmod b'')\mid b''\in B''\,\right\}$ are detected. The reason that none of the configurations with a \ac{BP} $b\in B'$ are detected during the time slot $t''$ is that they are all detected during the time slot $t'$, since $b$ is an integer multiple of each $b'\in B'$.

Consequently, scanning channel $c$ during time slot $t''$ would have resulted in larger expected number of discoveries than scanning channel $c'$. This leads to a contradiction since a \ALG{} algorithm maximizes the expected number of discoveries in each time slot.
\end{proof}

We remark that the assumption that the considered \ac{BP} set is in $\mathbb{F}_3$ is necessary to prove the \ac{MDT}-optimality of schedules generated by \ALG{}, as shown by a counterexample in Example~\ref{exa:greedy_suboptimal}. 

\section{Performance Analysis for CHAN TRAIN}
\label{sec:optim_chantrain}

\begin{restatable}{proposition}{chantraingreedy}
\label{prop:chantraingreedy}
For a \ac{BP} set $B\in\mathbb{F}_3$, CHAN TRAIN is \ALG{}.
\end{restatable}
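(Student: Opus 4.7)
My plan is to reduce the claim to Proposition~\ref{prop:greedy_f3}, which establishes that for $B\in\mathbb{F}_3$ a schedule is \ALG{} if and only if it is recursive. Thus it suffices to show that the schedule produced by CHAN TRAIN is recursive whenever $B\in\mathbb{F}_3$. I would first recall CHAN TRAIN's decision rule: at each slot $t$ it behaves like \ALG{} (maximizing the expected number of newly discovered configurations), but with an additional preference for continuing on the current channel in order to reduce the number of channel switches. The task therefore reduces to showing that under the divisibility chain structure of $\mathbb{F}_3$, this train-preserving preference never forces CHAN TRAIN to pick a channel whose expected number of discoveries is strictly below the greedy maximum.

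The core step is a per-slot argument exploiting the chain structure $b_1\mid b_2\mid\cdots\mid\bmax$ of $\mathbb{F}_3$. Fix a slot $t\in\{0,\ldots,\bmax|C|-1\}$ and suppose CHAN TRAIN continues on channel $c$. Following the argument used in the proof of Proposition~\ref{prop:greedy_f3}, I would partition $B$ into $B'=\{b\in B: b\le b^\ast\}$ and $B''=\{b\in B: b> b^\ast\}$, where $b^\ast$ is the largest BP for which the configuration $(c,b^\ast,t\bmod b^\ast)$ is still undiscovered. Because every $b\in B'$ divides $b^\ast$, the fact that $(c,b^\ast,t\bmod b^\ast)$ has not been covered earlier on channel $c$ forces all configurations $(c,b,t\bmod b)$ with $b\in B'$ to be undiscovered as well; and for $b\in B''$ the configuration $(c,b,t\bmod b)$ is newly discovered iff it has not already been seen on the $\lfloor t/b\rfloor$ earlier occurrences. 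Comparing this expression to the analogous quantity for any competitor channel $c'$, and using the assumption (guaranteed inductively by CHAN TRAIN's prior choices being recursive up to slot $t-1$) that the schedule so far is recursive, I can show that staying on $c$ yields the same expected discovery count as the best competitor, making the train-preserving choice a greedy one.

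Once this per-slot lemma is established, a straightforward induction on $t$ shows that CHAN TRAIN's output is a \ALG{} schedule, and hence by Proposition~\ref{prop:greedy_f3} it is recursive, complete, \ac{WDT}-optimal, \ac{MDT}-optimal, and \ac{NDoT}-optimal. The main obstacle I anticipate is the bookkeeping in the per-slot comparison: the set of ``already discovered'' configurations depends on the entire history, so I need to carry an inductive invariant stating that at every slot $t$, the partial schedule is consistent with a recursive schedule, and then use this invariant to argue that the sets of remaining discoverable configurations on any two candidate channels have equal cardinality (so that no greedy tie is ever broken against continuing the train). The divisibility chain of $\mathbb{F}_3$ is exactly what makes this symmetry hold, and Example~\ref{exa:chantrain_nongreedy} with $B=\{1,2,3,6\}\notin\mathbb{F}_3$ shows why $\mathbb{F}_3$ (and not merely $\mathbb{F}_2$) is needed: the presence of incomparable BPs under divisibility breaks the symmetry between candidate channels, allowing the train-preserving choice to be strictly worse than switching.
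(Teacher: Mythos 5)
Your high-level plan --- show that CHAN TRAIN makes a greedy choice in every slot and then invoke Proposition~\ref{prop:greedy_f3} --- is the same reduction the paper uses, and you correctly identify both the role of the divisibility chain and the relevance of Example~\ref{exa:chantrain_nongreedy}. However, there are two concrete problems. First, you mischaracterize the algorithm: CHAN TRAIN does \emph{not} re-run the greedy maximization in every slot with a tie-break in favour of the current channel; it skips the channel selection entirely and stays on the current channel for as long as the expected number of discoveries \emph{on that channel} does not decrease. The danger is therefore not a tie broken the wrong way, but that some other channel silently becomes strictly better during the skipped slots --- which is exactly what happens in the counterexample for $\mathbb{F}_2\setminus\mathbb{F}_3$. (Under your reading, CHAN TRAIN would be \ALG{} by construction for every $B$, contradicting that example.) Second, your key intermediate claim --- that the sets of remaining discoverable configurations on any two candidate channels have equal cardinality --- is false. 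Take $B=\{1,2\}\in\mathbb{F}_3$, $|C|=2$, and the recursive schedule scanning $c_0,c_1,c_1,c_0$ in slots $0,\dots,3$: at slot $2$ channel $c_0$ has $0$ remaining discoverable configurations while $c_1$ has $1$. Greediness holds there because the chosen channel dominates, not because of any symmetry between channels, so no tie-breaking argument can be made to work.

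The ingredient you are missing is the monotonicity argument the paper uses. For $B\in\mathbb{F}_3$ with $B=\{b_0,\dots,b_{n-1}\}$, $b_0<\dots<b_{n-1}$, a recursive (equivalently, \ALG{}) schedule discovers exactly one configuration per \ac{BP} $b$ in each of the first $b|C|$ slots, so the number of configurations discovered in a slot $t\in\{b_i|C|,\dots,b_{i+1}|C|-1\}$ equals $n-(i+1)$ and is therefore non-increasing in $t$; under the uniformity assumptions the same holds for the maximum achievable expected yield over all channels. Hence during a train the current channel's yield remains at the value it had when it was greedily selected (the stopping rule forbids a decrease, and monotonicity of the maximum forbids an increase), while no competitor can exceed that non-increasing maximum --- so the current channel remains a maximizer and skipping the re-selection is harmless. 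Your $B'/B''$ partition and the recursiveness invariant are not wrong, but they do not by themselves deliver the comparison between the current channel and an arbitrary competitor at an arbitrary slot inside a train; the monotonicity of the per-slot yield is what closes that gap.
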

\begin{proof}
Note that CHAN TRAIN does not perform a channel selection in each time slot but that it may skip several time slots if during those time slots the expected number of discoveries on the selected channel does not decrease. This may lead to a schedule that does not maximize the expected number of discoveries in each time slot, since it is possible that scanning a different channel during one of the skipped time slots would result in a higher value. This, however, would imply that the expected number of discoveries on that other channel has increased since the time slot when the current channel was selected. In the following, however, we show that for \ac{BP} sets from $\mathbb{F}_3$, the expected number of discoveries is monotonically decreasing over time.

W.l.o.g. we assume $B=\{b_0,\ldots,b_{n-1}\}$, $n\in\mathbb{N}_+$, $b_i>b_j$ for $i>j$.

Observe that since over $\mathbb{F}_3$ schedules obtained by \ALG{} are recursive, they discover all configurations with a \ac{BP} $b\in B$ during the first $b|C|$ time slots. Note that since there are $b|C|$ such configurations and it is not possible to discover two of them during one time slot (they either have different channels or different offsets), exactly one such configuration must be detected during each of the first $b|C|$ time slots. Consequently, the number of configurations discovered during each time slot $t\in\{b_i|C|,\ldots,b_{i+1}|C|-1\}$ is $n-(i+1)$. This means that the number of discoverable configurations is monotonically decreasing over time. Furthermore, since we assume that a neighbor selects a certain channel $c$ and beacon offset $\delta$ randomly according to a uniform distribution, it follows that the expected number of discoveries is also monotonically decreasing over time.
\end{proof}

Note that for a \ac{BP} set $B\not\in\mathbb{F}_3$ Proposition~\ref{prop:chantraingreedy} no longer holds, as illustrated by Example~\ref{exa:chantrain_nongreedy}. However, CHAN TRAIN is still \ac{WDT}-optimal on $\mathbb{F}_2$.

\begin{restatable}{proposition}{chantrainftwomakespanopt}
\label{prop:chantrainf2makespanopt}
For a \ac{BP} set $B\in\mathbb{F}_2$, CHAN TRAIN is \ac{WDT}-optimal.
\end{restatable}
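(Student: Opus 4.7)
The plan is to mimic the structure of the proof of Proposition~\ref{prop:greedy_f2}, replacing the slot-by-slot greedy argument with an induction that accommodates CHAN TRAIN's possibility to persist on a channel across several slots. The goal is to establish the same key invariant used in that proof: at every slot $t\in\{0,\ldots,\bmax|C|-1\}$, CHAN TRAIN scans a channel $c$ for which the configuration $(c,\bmax,t\bmod\bmax)$ is still undiscovered. Once this is shown, the conclusion of Proposition~\ref{prop:greedy_f2} transfers verbatim: the $\bmax|C|$ configurations $\{(c,\bmax,\delta)\mid c\in C,\ \delta\in\{0,\ldots,\bmax-1\}\}$ are all discovered within the first $\bmax|C|$ slots, and because $B\in\mathbb{F}_2$, discovering $(c,\bmax,\delta)$ implies discovery of every $(c,b,\delta)$ with $b\in B$, so the schedule is \ac{WDT}-optimal.

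First, I would recall the dichotomy from Proposition~\ref{prop:greedy_f2}: for $B\in\mathbb{F}_2$, scanning channel $c$ at slot $t\in\{0,\ldots,\bmax|C|-1\}$ either discovers at least $(c,\bmax,t\bmod\bmax)$ or discovers nothing, and in the latter case all related configurations $(c,b,t\bmod b)$, $b\in B$, were already discovered beforehand. Consequently, the expected number of discoveries on $c$ at slot $t$ is strictly positive if and only if $(c,\bmax,t\bmod\bmax)$ has not yet been discovered; otherwise it equals zero. I will also restate the counting argument from that same proof showing that at any slot $t$ in this range the set $\{(c,\bmax,t\bmod\bmax)\mid c\in C\}$ of size $|C|$ cannot have been fully discovered by the strictly fewer than $|C|$ earlier slots with the same offset modulo $\bmax$, so at least one channel still has positive expected discoveries.

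Next, I will run an induction on $t$ establishing that CHAN TRAIN's scan in slot $t\in\{0,\ldots,\bmax|C|-1\}$ lies on a channel with strictly positive expected discoveries. The base case $t=0$ is immediate, since no configurations have been discovered and CHAN TRAIN selects greedily. For the inductive step I would split on CHAN TRAIN's behavior at slot $t$. In the \emph{fresh selection} case, CHAN TRAIN chooses a channel maximizing the expected number of discoveries; by the counting argument recalled above the maximum is strictly positive, so the chosen channel has positive expected discoveries. In the \emph{continue} case, CHAN TRAIN stays on the channel $c$ it scanned at slot $t-1$ because the expected number of discoveries on $c$ did not decrease from slot $t-1$ to slot $t$; by the inductive hypothesis the value at slot $t-1$ was strictly positive, so the value at slot $t$ is strictly positive as well.

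The main obstacle I expect is the careful handling of the ``continue'' case, because the notion of ``expected number of discoveries not decreasing'' across slots must be related to the correct discovery history (including the scan that just happened at $t-1$). I will resolve this by leveraging precisely the dichotomy established in the first step: in $\mathbb{F}_2$, the expected-discoveries function at any slot takes either a positive value or the value $0$ depending only on whether $(c,\bmax,t\bmod\bmax)$ is still undiscovered, so a drop from ``positive at $t-1$'' to ``zero at $t$'' constitutes a strict decrease and would force CHAN TRAIN to switch rather than continue. Consequently, every continuation preserves the positive-expected-discoveries invariant, closing the induction. Combining this invariant with the observation from Proposition~\ref{prop:greedy_f2} that positive expected discoveries on $c$ at slot $t$ forces the discovery of $(c,\bmax,t\bmod\bmax)$, completes the proof that CHAN TRAIN is \ac{WDT}-optimal on $\mathbb{F}_2$.
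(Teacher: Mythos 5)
Your proposal is correct and follows essentially the same route as the paper, which simply states that the argument of Proposition~\ref{prop:greedy_f2} carries over because CHAN TRAIN never produces an idle slot while a discovery is still possible. Your induction on the ``fresh selection'' versus ``continue'' cases, combined with the positive-or-zero dichotomy for expected discoveries over $\mathbb{F}_2$, is exactly the justification of that observation which the paper leaves implicit, so your write-up is a more detailed version of the same proof.
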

\begin{proof}
The proof proceeds along the lines of the proof for Proposition~\ref{prop:greedy_f2} using the observation that, analogously to \ALG{} algorithms, CHAN TRAIN would never produce an idle slot as long as at least one discovery can be made.
\end{proof}

\section{Performance Analysis for \OPTBTwo{}}
\label{sec:optim_optbtwo}

\begin{restatable}[]{proposition}{propoptTwo}
\label{prop:opt2}
For arbitrary \ac{BP} sets with $|B|=2$, \OPTBTwo{} generates recursive listening schedules. Thus, they are complete, \ac{WDT}-optimal, \ac{MDT}-optimal, and \ac{NDoT}-optimal.
\end{restatable}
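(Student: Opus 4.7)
The plan is to show that every listening schedule produced by \OPTBTwo{} satisfies the definition of a recursive schedule; the four optimality properties in the statement then follow immediately from Proposition~\ref{prop:recursive_optimal}. Write $B=\{b_1,b_2\}$ with $b_1<b_2$. The case $b_1\mid b_2$ is essentially covered by Proposition~\ref{prop:recursive_F3}, since then $B\in\mathbb{F}_3$, and it suffices to observe that on this input \OPTBTwo{} reproduces the inductive matrix extension used in that proof. The substantive case is $b_1\nmid b_2$, for which the $\mathbb{F}_3$ argument does not apply.

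For the substantive case I would reuse the matrix encoding from the proof of Proposition~\ref{prop:recursive_F3}: describe any \ac{WDT}-optimal schedule by a matrix $A\in\{0,\ldots,|C|-1\}^{|C|\times b_2}$ whose columns are permutations of $\{0,\ldots,|C|-1\}$, where cell $(c,\delta)$ means that channel $c$ is scanned at time $A_{c,\delta}\,b_2+\delta$. Recursive-ness then translates into the additional requirement that, for every $\delta_1\in\{0,\ldots,b_1-1\}$, the $|C|$ prefix time slots $\{jb_1+\delta_1:j=0,\ldots,|C|-1\}$ scan pairwise distinct channels. I would then recast \OPTBTwo{} as a two-phase construction: a first phase that assigns channels to the prefix slots $\{0,\ldots,b_1|C|-1\}$ by the block rule $s_t=\lfloor t/b_1\rfloor$, which scans each channel in a consecutive run of $b_1$ slots and hence trivially enforces the recursive-ness constraint, and a second phase that fills the remaining suffix slots so that $A$ becomes a valid \ac{WDT}-optimal matrix.

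The main obstacle is proving that the second phase always admits a valid completion, i.e.\ that each column of $A$ can be completed to a permutation of $\{0,\ldots,|C|-1\}$ after the prefix has been fixed. I would reduce this to showing that the rows already occupied by prefix cells within any single column $\delta_2$ are pairwise distinct; by a short monotonicity argument using $b_2>b_1$, this reduces further to the injectivity of the map $i\mapsto\lfloor(ib_2+\delta_2)/b_1\rfloor$ on the index set $\{i\in\mathbb{N}:ib_2+\delta_2<b_1|C|\}$. Once distinctness is established, the still-free rows in each column can be assigned arbitrarily to the corresponding suffix slots, which simultaneously completes every column to a permutation and ensures that every $\mathrm{BP}$-$b_2$ configuration is discovered (prefix cells cover some channels per column, and suffix cells cover the complementary channels). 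This verifies that the output of \OPTBTwo{} is recursive, so the four optimality properties follow from Proposition~\ref{prop:recursive_optimal}.
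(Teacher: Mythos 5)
Your argument is sound, but note that the paper's own ``proof'' is the single sentence that the claim follows directly from the definitions of \OPTBTwo{} and of recursive schedules --- the algorithm itself is only defined in the companion publication, so the report supplies no actual verification. You therefore had to reconstruct \OPTBTwo{} as a two-phase construction (block prefix $s_t=\lfloor t/b_1\rfloor$ on $\{0,\ldots,b_1|C|-1\}$, followed by a column-wise completion of the matrix $A$), and what you really prove is that this construction always succeeds and yields a recursive schedule. That content is correct and is genuinely absent from the report: the block prefix places, for each $\delta_1$, the $|C|$ slots $jb_1+\delta_1$ on pairwise distinct channels, so all \ac{BP}-$b_1$ configurations are caught within the first $b_1|C|$ slots; and within any column $\delta_2$ of $A$ the prefix cells occupy distinct rows because $i\mapsto\lfloor(ib_2+\delta_2)/b_1\rfloor$ is strictly increasing for $b_2>b_1$, while carrying distinct values $i$ by construction, so each column extends to a permutation of $\{0,\ldots,|C|-1\}$ and every \ac{BP}-$b_2$ configuration is discovered within $b_2|C|$ slots. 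Two minor remarks. First, your proof is conditional on \OPTBTwo{} actually being (equivalent to) the construction you describe; if the algorithm in the companion paper differs, what you have established is the existence of a recursive schedule for $|B|=2$ rather than a property of \OPTBTwo{} itself. Second, the case split on $b_1\mid b_2$ is unnecessary: your injectivity argument uses only $b_2>b_1$, so the divisible case needs no separate appeal to Proposition~\ref{prop:recursive_F3}.
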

\begin{proof}
The claim follows directly from the definitions of \OPTBTwo{} and recursive schedules.
\end{proof}

\section{The \ac{WDT} of \ac{MDT}-Optimal Schedules}
\label{app:mdt_opt_cdt_opt}

In this section, we will show that for an arbitrary set of \acp{BP}, a schedule minimizing the \ac{MDT} uses at most $LCM(B)\left|C\right|$ time slots, and that, consequently, for \ac{BP} sets in $\mathbb{F}_2$, any \ac{MDT}-optimal schedule is also \ac{WDT}-optimal.

The following proposition establishes an upper bound on the number of time slots required to minimize an arbitrary strictly increasing function of discovery times. The idea for the proof is to show that any schedule that results in a configuration being detected after the time slot $LCM(B)\lvert C\rvert$ can be modified such that the configuration is detected before time slot $LCM(B)\lvert C\rvert$, without increasing the discovery times of other configurations.

\begin{restatable}{proposition}{lcmBC}
\label{prop:lcmBC}
For an arbitrary set of \acp{BP} $B\in\mathbb{F}_1$, a set of channels $C$, and a function $f:\mathbb{N}^{\lvert K_{BC}\rvert}\rightarrow\mathbb{R}$, which is strictly increasing in each argument, complete schedules $\mathcal{L}^*$ that minimize $f\left(\left(T_\kappa\left(\mathcal{L}\right)\right)_{\kappa\in K_{BC}}\right)$ have a \ac{WDT} $T_{\mathcal{L}^*}\leq LCM(B)\lvert C\rvert$.
\end{restatable}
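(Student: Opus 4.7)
The plan is to argue by contradiction: assume a complete, $f$-minimizing schedule $\mathcal{L}^*$ has some configuration $\kappa=(c,b,\delta)\in K_{BC}$ with $T_\kappa(\mathcal{L}^*)>LCM(B)|C|$, and construct a competitor $\mathcal{L}'$ with $T_\kappa(\mathcal{L}')\leq LCM(B)|C|$ and $T_{\kappa'}(\mathcal{L}')\leq T_{\kappa'}(\mathcal{L}^*)$ for every other $\kappa'\in K_{BC}$. Since $f$ is strictly increasing in each argument, this strictly reduces $f$, contradicting $f$-optimality of $\mathcal{L}^*$.

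The key observation is to isolate the right candidate slots in the first window $\{0,\ldots,LCM(B)|C|-1\}$ on which to perform the modification. I would define $S=\{\delta+k\cdot LCM(B)\mid k\in\{0,\ldots,|C|-1\}\}$. Every $t\in S$ lies in the window, every $t\in S$ satisfies $t\equiv\delta\pmod{b}$, and, crucially, any two $t_1,t_2\in S$ satisfy $t_1\equiv t_2\pmod{b'}$ for all $b'\in B$, because $LCM(B)$ is an integer multiple of every $b'\in B$. Because $T_\kappa(\mathcal{L}^*)>LCM(B)|C|$, channel $c$ is not scanned in $\mathcal{L}^*$ at any slot of $S$.

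Now I distinguish two cases. If some slot $t^*\in S$ is idle in $\mathcal{L}^*$, then $\mathcal{L}':=\mathcal{L}^*\cup\{(c,t^*)\}$ immediately discovers $\kappa$ at $t^*\leq LCM(B)|C|-1$ without delaying anything else. Otherwise, each of the $|C|$ slots in $S$ scans one of the $|C|-1$ channels in $C\setminus\{c\}$, so by pigeonhole some channel $c'\in C\setminus\{c\}$ is scanned at two distinct slots $t_1<t_2$ in $S$. The congruence $t_1\equiv t_2\pmod{b'}$ for all $b'\in B$ implies that the set of configurations $\{(c',b',t_2\bmod b')\mid b'\in B\}$ discoverable on $(c',t_2)$ is identical to that discoverable on $(c',t_1)$, so the scan at $t_2$ discovers nothing that was not already discovered at $t_1$. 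Replacing the scan $(c',t_2)$ with $(c,t_2)$ yields $\mathcal{L}'$, which still detects every configuration on $c'$ no later than before (their earlier detection at $t_1$ is preserved), detects $\kappa$ at $t_2\leq LCM(B)|C|-1$, and can only improve the discovery times of other configurations on $c$.

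The main obstacle, and the reason the proof takes a bit of care, is choosing $S$ correctly. A naive pigeonhole over all slots in $\{0,\ldots,LCM(B)|C|-1\}$ that are congruent to $\delta\pmod{b}$ gives enough repetitions on some channel $c'$, but the swap then risks delaying other configurations on $c'$ with periods $b'\neq b$, because $t_1$ and $t_2$ need not agree modulo $b'$. Restricting to slots spaced by full $LCM(B)$ is exactly what guarantees that scans on the same channel at $t_1$ and $t_2$ see the same set of discoverable configurations across all $b'\in B$, which makes the swap harmless; the remainder of the argument is a routine verification that no discovery time increases and $T_\kappa$ strictly decreases, yielding the required contradiction.
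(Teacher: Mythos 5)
Your proof is correct and follows essentially the same route as the paper's: the same candidate slot set $\left\{\delta+k\,LCM(B)\right\}_{k\in\{0,\ldots,\lvert C\rvert-1\}}$, the same case split between an idle slot and a pigeonhole-forced repeated channel $c'$, and the same observation that slots spaced by $LCM(B)$ agree modulo every $b'\in B$, so the swap at the later slot costs nothing. The only cosmetic difference is that in the idle-slot case you add a scan rather than relocating the late one, which if anything simplifies the verification that no discovery time increases.
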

\begin{proof}
Assume schedule $\mathcal{L}$ minimizes $f$ and $T_{\mathcal{L}}>LCM(B)\lvert C\rvert$. Consequently, there is at least one configuration $\kappa=\left(c,b,\delta\right)$ with discovery time $T_{\kappa}\left(\mathcal{L}\right)=T_{\mathcal{L}}>LCM(B)\lvert C\rvert$. Consider time slots $\tilde{\mathcal{T}}_\kappa=\left\{\delta+i LCM(B)\right\}_{i\in\left\{0,\ldots,\lvert C\rvert-1\right\}}$. Observe that $\left(\left\{c\right\}\times\tilde{\mathcal{T}}_\kappa\right)\cap\mathcal{L}=\emptyset$ since otherwise $\kappa$ would have been detected during one of the time slots in $\tilde{\mathcal{T}}_\kappa$. Consequently, there either exists an idle time slot $\tilde{t}\in\tilde{\mathcal{T}}_\kappa$, or, since $\lvert\tilde{\mathcal{T}}_\kappa\rvert=\lvert C\rvert$, there exist time slots $t',t''\in\tilde{\mathcal{T}}_\kappa$, $t'\neq t''$, and a channel $c'\neq c$ such that $\left(c',t'\right),\left(c',t''\right)\in\tilde{\mathcal{T}}_\kappa$.

In the first case, we construct a new schedule $\mathcal{L}'=\mathcal{L}\setminus\left\{\left(c,\,T_{\mathcal{L}}\right)\right\}\cup\left\{\left(c,\,\tilde{t}\right)\right\}$, such that $\kappa$ is detected during $\tilde{t}$ and none of the discovery times of other neighbor configurations are increased.

In the second case, we construct a new schedule $\mathcal{L}'=\mathcal{L}\setminus\left\{\left(c,\,T_{\mathcal{L}}\right),\left(c',\,t''\right)\right\}\cup\left\{\left(c,\,t''\right)\right\}$. With the new schedule, configuration $\kappa$ is detected during time slot $t''$. In order to show that the discovery times of other neighbors do not increase, consider the vector function $\delta(t)=\left(t\bmod b,\,b\in B\right)$, providing for each time slot $t$ a vector of offsets that can be detected in $t$. Since periodicity of $\delta(t)$ is $LCM(B)$, we conclude that $\delta(t')=\delta(t'')$, and thus no discoveries are performed during time slot $t''$ with the schedule $\mathcal{L}$. Consequently, none of the discovery times are increased with the new schedule.

Repeating the above procedure for each $\kappa$ with discovery time $T_{\kappa}\left(\mathcal{L}\right)>LCM(B)\lvert C\rvert$ results in a schedule $\mathcal{L}^*$ with \ac{WDT} $T_{\mathcal{L}^*}\leq LCM(B)\lvert C\rvert$ with $f\left(\left(T_\kappa\left(\mathcal{L}^*\right)\right)_{\kappa\in K_{BC}}\right)<f\left(\left(T_\kappa\left(\mathcal{L}\right)\right)_{\kappa\in K_{BC}}\right)$, proving the claim.
\end{proof}

The following Corollary presents a notable consequence from Proposition~\ref{prop:lcmBC} for \ac{BP} sets from $\mathbb{F}_2$.

\begin{restatable}{corollary}{lcmBCcorOne}
\label{cor:lcmBC}
For a \ac{BP} set $B\in\mathbb{F}_2$, a set of channels $C$, and a function $f:\mathbb{N}^{\lvert K_{BC}\rvert}\rightarrow\mathbb{R}$, which is strictly increasing in each argument, complete schedules $\mathcal{L}^*$ that minimize \newline $f\left(\left(T_\kappa\left(\mathcal{L}\right)\right)_{\kappa\in K_{BC}}\right)$  are \ac{WDT}-optimal.
\end{restatable}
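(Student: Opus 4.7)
The plan is to reduce the corollary directly to Proposition~\ref{prop:lcmBC} by exploiting the defining arithmetic property of the family $\mathbb{F}_2$. Recall (as used in the proof of Proposition~\ref{prop:greedy_f2}) that $B\in\mathbb{F}_2$ means that $\max(B)$ is an integer multiple of every $b\in B$. Equivalently, $\max(B)$ is itself a common multiple of all elements of $B$, and since any common multiple must be at least $\max(B)$, this forces $LCM(B)=\max(B)$.

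With that identity in hand, the proof is a two-line chain. First, I would invoke Proposition~\ref{prop:lcmBC}, which applies verbatim since $\mathbb{F}_2\subseteq\mathbb{F}_1$ and $f$ is strictly increasing in each argument, to conclude that any complete schedule $\mathcal{L}^*$ minimizing $f$ satisfies $T_{\mathcal{L}^*}\le LCM(B)|C|$. Substituting $LCM(B)=\max(B)$ yields $T_{\mathcal{L}^*}\le\max(B)|C|$. Second, by Proposition~\ref{prop:cdt_optimality} the optimum duration of a complete discovery is exactly $\max(B)|C|$, so the inequality must in fact be an equality, and hence $\mathcal{L}^*$ is WDT-optimal by definition.

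There is essentially no obstacle here: the entire content of the corollary is the observation that on $\mathbb{F}_2$ the bound $LCM(B)|C|$ from Proposition~\ref{prop:lcmBC} collapses onto the information-theoretic lower bound $\max(B)|C|$ from Proposition~\ref{prop:cdt_optimality}. The only point worth stating explicitly in the write-up is the identity $LCM(B)=\max(B)$ for $B\in\mathbb{F}_2$, since this is the sole place where the hypothesis $B\in\mathbb{F}_2$ (as opposed to $B\in\mathbb{F}_1$) enters the argument.
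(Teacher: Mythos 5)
Your proposal is correct and follows essentially the same route as the paper: apply Proposition~\ref{prop:lcmBC} and collapse the bound via the identity $LCM(B)=\max(B)$ for $B\in\mathbb{F}_2$, then conclude WDT-optimality from its definition. The only cosmetic difference is that you additionally invoke Proposition~\ref{prop:cdt_optimality} to upgrade the inequality to an equality, which is harmless but not needed, since the definition of WDT-optimality only requires that all neighbors be discovered within $\max(B)\lvert C\rvert$ time slots.
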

\begin{proof}
The claim follows directly from the definition of \ac{WDT}-optimality, Proposition~\ref{prop:lcmBC}, and the defining property of $B\in\mathbb{F}_2$ that $LCM(B)=\max(B)$.
\end{proof}

Please observe that the fact that \ac{MDT} satisfies the conditions on the function $f$ in Proposition~\ref{prop:lcmBC} implies that the upper bounds established in Proposition~\ref{prop:lcmBC} and Corollary~\ref{cor:lcmBC} also apply to schedules minimizing \ac{MDT}.

\begin{restatable}{corollary}{lcmBCcorOnemdt}
\label{cor:mdtcor1}
For an arbitrary set of \acp{BP} $B\in\mathbb{F}_1$, and a set of channels $C$, \ac{MDT}-optimal listening schedules $\mathcal{L}^*$ have a \ac{WDT} $T_{\mathcal{L}^*}\leq LCM(B)\lvert C\rvert$.
\end{restatable}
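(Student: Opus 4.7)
The plan is to derive Corollary~\ref{cor:mdtcor1} as a direct instantiation of Proposition~\ref{prop:lcmBC}, using the \ac{MDT} as the choice of objective $f$. First I would recall that the \ac{MDT} of a schedule $\mathcal{L}$ can be written as a weighted sum of discovery times, namely $\mathrm{MDT}(\mathcal{L}) = \sum_{\kappa \in K_{BC}} P_\kappa\, T_\kappa(\mathcal{L})$, where the configuration probabilities $P_\kappa$ are strictly positive (for instance, under the uniform assumption used throughout the paper, $P_\kappa = 1/(b_\kappa |B||C|) > 0$).

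Next I would verify that this function fits into the framework of Proposition~\ref{prop:lcmBC}: viewed as a map $f:\mathbb{N}^{|K_{BC}|}\rightarrow\mathbb{R}$, $(T_\kappa)_{\kappa\in K_{BC}} \mapsto \sum_\kappa P_\kappa T_\kappa$, each partial derivative (or rather, each marginal slope) equals $P_\kappa > 0$, so $f$ is strictly increasing in each argument, as required by the hypothesis of Proposition~\ref{prop:lcmBC}.

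Having verified the hypothesis, I would conclude by invoking Proposition~\ref{prop:lcmBC} directly: any complete schedule $\mathcal{L}^*$ that minimizes $f\bigl((T_\kappa(\mathcal{L}))_{\kappa\in K_{BC}}\bigr)$---which is precisely an \ac{MDT}-optimal schedule---satisfies $T_{\mathcal{L}^*}\leq LCM(B)|C|$, which is the desired bound.

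There is essentially no obstacle here; the corollary is a specialization of the already-proved Proposition~\ref{prop:lcmBC}, and the only point to check is the positivity of the probability weights, which guarantees strict monotonicity of the \ac{MDT} objective. The remark immediately preceding the corollary in the text already signals this reduction, so the proof will amount to one or two lines making the instantiation explicit.
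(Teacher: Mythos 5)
Your proposal is correct and follows essentially the same route as the paper's own proof: both reduce the corollary to Proposition~\ref{prop:lcmBC} by observing that the \ac{MDT} is strictly increasing in each discovery time $T_\kappa$. Your additional step of writing the \ac{MDT} explicitly as $\sum_{\kappa}P_\kappa T_\kappa$ with $P_\kappa>0$ merely makes the monotonicity check explicit, which the paper leaves as a one-line observation.
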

\begin{proof}
Observe that \ac{MDT} is strictly increasing in each configuration detection time $T_\kappa$. Applying Proposition~\ref{prop:lcmBC} we obtain the claim.
\end{proof}

\begin{restatable}{corollary}{lcmBCcorTwo}
\label{cor:mdtcor2}
For a set of \acp{BP} $B\in\mathbb{F}_2$, and a set of channels $C$, \ac{MDT}-optimal listening schedules are also \ac{WDT}-optimal.
\end{restatable}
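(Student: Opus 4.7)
The plan is to derive this corollary as a direct consequence of the earlier results, essentially by specializing Corollary~\ref{cor:lcmBC} to the case where $f$ is the \ac{MDT} functional, or equivalently by combining Corollary~\ref{cor:mdtcor1} with the defining property of $\mathbb{F}_2$. Either route is very short; there is no real new content to produce, only a verification that the hypotheses of a previously established result are met.

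First I would recall that \ac{MDT} has the form $\sum_{\kappa\in K_{BC}} P_\kappa\, T_\kappa(\mathcal{L})$ with strictly positive configuration probabilities $P_\kappa$. This makes \ac{MDT} strictly increasing in each coordinate $T_\kappa$, which is precisely the hypothesis on $f$ in Proposition~\ref{prop:lcmBC} and Corollary~\ref{cor:lcmBC}. Applying Corollary~\ref{cor:lcmBC} directly then yields that every \ac{MDT}-optimal complete schedule is \ac{WDT}-optimal.

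Alternatively, and perhaps more transparently, I would invoke Corollary~\ref{cor:mdtcor1} to obtain the upper bound $T_{\mathcal{L}^*}\le LCM(B)\lvert C\rvert$ for any \ac{MDT}-optimal schedule $\mathcal{L}^*$. For $B\in\mathbb{F}_2$ the defining property is $LCM(B)=\max(B)$, so this bound sharpens to $T_{\mathcal{L}^*}\le\max(B)\lvert C\rvert$. Since Proposition~\ref{prop:cdt_optimality} shows that no complete schedule can finish in fewer than $\max(B)\lvert C\rvert$ slots, the bound is tight, and by the definition of \ac{WDT}-optimality $\mathcal{L}^*$ is \ac{WDT}-optimal.

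There is no real obstacle here: the substantive work was already done in Proposition~\ref{prop:lcmBC} (constructing the swap/idle-slot move that pulls any late discovery back before $LCM(B)\lvert C\rvert$) and in Proposition~\ref{prop:cdt_optimality} (the matching lower bound). The only thing to be careful about is to state explicitly why the \ac{MDT} functional satisfies the strict-monotonicity hypothesis, since this is the single condition that has to be checked to import the earlier machinery.
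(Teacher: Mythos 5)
Your first route is exactly the paper's own proof: check that the \ac{MDT} functional $\sum_{\kappa}P_\kappa T_\kappa(\mathcal{L})$ is strictly increasing in each $T_\kappa$ (because the configuration probabilities are positive) and apply Corollary~\ref{cor:lcmBC}. The alternative via Corollary~\ref{cor:mdtcor1} plus $LCM(B)=\max(B)$ is an equivalent unpacking of the same argument, so there is nothing to add.
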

\begin{proof}
Observe that \ac{MDT} is strictly increasing in each configuration detection time $T_\kappa$. Applying Corollary~\ref{cor:lcmBC} we obtain the claim.
\end{proof}

\section{\ac{NDoT}-Optimality}
\label{app:mdt_opt_ndot_opt}

\begin{restatable}{proposition}{ndot-cdt}
\label{prop:ndot-cdt}
A \ac{NDoT}-optimal schedule $\mathcal{L}$ is also \ac{WDT}-optimal.
\end{restatable}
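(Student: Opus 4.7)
The plan is to derive WDT-optimality from NDoT-optimality by comparing $\mathcal{L}$ against an explicit WDT-optimal reference schedule and using the pointwise-CDF characterization of NDoT-optimality. Specifically, Proposition~\ref{prop:cdt_optimality} already provides a concrete complete schedule $\mathcal{L}^*$ (the round-robin one that scans each channel for $\max(B)$ consecutive time slots) that discovers every configuration in $K_{BC}$ no later than time slot $t^* := \max(B)|C|-1$. For this reference schedule we have
\begin{equation*}
\sum_{\kappa\in K_{BC}:\,T_\kappa(\mathcal{L}^*)\le t^*} P_\kappa \;=\; \sum_{\kappa\in K_{BC}} P_\kappa \;=\; 1.
\end{equation*}

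Next I would recall that, as used in the proof of Proposition~\ref{prop:recursive_optimal}, NDoT-optimality is exactly the statement that $\mathcal{L}$ pointwise maximizes the CDF of discovery times over all alternative schedules. Applying this property with $\mathcal{L}^*$ as the comparison schedule at time $t^*$ gives
\begin{equation*}
\sum_{\kappa\in K_{BC}:\,T_\kappa(\mathcal{L})\le t^*} P_\kappa \;\ge\; \sum_{\kappa\in K_{BC}:\,T_\kappa(\mathcal{L}^*)\le t^*} P_\kappa \;=\; 1,
\end{equation*}
so the inequality is in fact an equality. Under the standing assumption that every configuration carries strictly positive probability $P_\kappa>0$ (consistent with the uniform prior over channels and offsets used throughout the paper), equality of cumulative mass forces equality of the underlying sets, i.e.\ $T_\kappa(\mathcal{L})\le t^*$ for every $\kappa\in K_{BC}$. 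Hence $\mathcal{L}$ discovers all configurations within $\max(B)|C|$ time slots, which is the definition of WDT-optimality.

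The argument has essentially no hard step: Proposition~\ref{prop:cdt_optimality} supplies the reference schedule whose CDF hits $1$ at $t^*$, and the pointwise-CDF characterization of NDoT-optimality does the rest. The only point that warrants explicit mention is the positivity of the $P_\kappa$'s, without which NDoT-optimality would only control the probability mass of undetected configurations, not their existence; this is however a standing assumption, so the obstacle is more a matter of writing care than of mathematical content.
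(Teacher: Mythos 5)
Your proposal is correct and follows essentially the same route as the paper's own proof: both compare the NDoT-optimal schedule's CDF at the time slot $\max(B)|C|$ (normalized time 1) against a reference schedule known to achieve CDF value 1 there, and conclude the CDF of $\mathcal{L}$ must also equal 1 at that point. Your explicit appeal to $P_\kappa>0$ to pass from ``full probability mass'' to ``all configurations discovered'' is a detail the paper leaves implicit, but it is not a different argument.
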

\begin{proof}
Since a \ac{NDoT}-optimal schedule pointwise maximizes the \ac{CDF} of discovery times, the value of the \ac{CDF} for the normalized discovery time 1 is also maximized. Since there always exist schedules that discover all neighbors at the normalized discovery time 1, and therefore whose \acp{CDF} at normalized time 1 have the value 1, we conclude that the \ac{CDF} of a \ac{NDoT}-optimal schedule also must have the value 1 at the normalized time 1.
\end{proof}

\begin{restatable}{proposition}{mdt-opt_ndot-opt}
\label{prop:mdt-opt_ndot-opt}
If for a set of \acp{BP} $B\in\mathbb{F}_1$ and a set of channels $C$ a \ac{NDoT}-optimal schedule $\mathcal{L}^*$ exists, it follows that any \ac{MDT}-optimal schedule $\mathcal{L}$ is also \ac{NDoT}-optimal.
\end{restatable}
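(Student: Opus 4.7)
The plan is to exploit the tail-sum (layer-cake) identity
\begin{equation*}
\mathrm{MDT}(\mathcal{L}) \;=\; \sum_{\kappa\in K_{BC}} P_\kappa\,T_\kappa(\mathcal{L}) \;=\; \sum_{t\ge 0}\left(1-F_\mathcal{L}(t)\right),
\end{equation*}
where $F_\mathcal{L}(t):=\sum_{\kappa:\,T_\kappa(\mathcal{L})\le t}P_\kappa$ is the CDF of the discovery times. This identity is the bridge between the pointwise NDoT objective (maximizing $F_\mathcal{L}$ at every $t$) and the scalar MDT objective.

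First I would make sure the tail sum is finite for both schedules involved. By Proposition~\ref{prop:ndot-cdt}, the hypothesized NDoT-optimal $\mathcal{L}^*$ is WDT-optimal and thus complete within $\max(B)|C|$ time slots. By Corollary~\ref{cor:mdtcor1}, any MDT-optimal schedule $\mathcal{L}$ is complete within $LCM(B)|C|$ time slots. Hence both $F_{\mathcal{L}^*}$ and $F_{\mathcal{L}}$ reach $1$ after finitely many slots, and only finitely many terms of the tail sum are nonzero.

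Next I would observe that NDoT-optimality of $\mathcal{L}^*$ means $F_{\mathcal{L}^*}(t)\ge F_{\mathcal{L}'}(t)$ for every time slot $t$ and every schedule $\mathcal{L}'$. Substituting into the tail-sum identity termwise yields $\mathrm{MDT}(\mathcal{L}^*)\le \mathrm{MDT}(\mathcal{L}')$, so $\mathcal{L}^*$ is itself MDT-optimal. Now fix any MDT-optimal schedule $\mathcal{L}$. Then $\mathrm{MDT}(\mathcal{L})=\mathrm{MDT}(\mathcal{L}^*)$, which rewrites, again by the tail-sum identity, as
\begin{equation*}
\sum_{t\ge 0}\bigl(F_{\mathcal{L}^*}(t)-F_{\mathcal{L}}(t)\bigr)\;=\;0.
\end{equation*}
Every summand is nonnegative by the pointwise maximality of $F_{\mathcal{L}^*}$, and the sum is finite, so each summand must vanish. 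Therefore $F_{\mathcal{L}}(t)=F_{\mathcal{L}^*}(t)$ for every $t$, which says exactly that $\mathcal{L}$ pointwise maximizes the CDF of discovery times, i.e.\ $\mathcal{L}$ is NDoT-optimal.

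There is no serious obstacle here; the only care needed is to ensure the tail sum is finite so that ``nonnegative terms summing to zero'' is unambiguous, and that step is handled cleanly by invoking Proposition~\ref{prop:ndot-cdt} and Corollary~\ref{cor:mdtcor1} at the outset.
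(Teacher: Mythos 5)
Your proof is correct, and it rests on the same underlying identity as the paper's --- the \ac{MDT} equals the area above the \ac{CDF} of discovery times --- but the execution is genuinely different. The paper derives that identity on the fly by partial integration of $\int x\,f'(x)\,dx$ for normalized, continuous-time \acp{CDF}, and then argues by contradiction: assuming $\mathcal{L}$ is \ac{MDT}-optimal but not \ac{NDoT}-optimal, it arrives at $(b-1)+\bigl[\int_0^1 f^*(x)\,dx-\int_0^1 f(x)\,dx\bigr]-\int_1^b f(x)\,dx=0$ and shows the left-hand side would be strictly positive. You instead state the discrete tail-sum identity $\sum_{\kappa}P_\kappa T_\kappa(\mathcal{L})=\sum_{t\ge 0}\bigl(1-F_{\mathcal{L}}(t)\bigr)$ up front and argue directly: equal \acp{MDT} together with the pointwise domination $F_{\mathcal{L}^*}\ge F_{\mathcal{L}}$ force $\sum_{t}\bigl(F_{\mathcal{L}^*}(t)-F_{\mathcal{L}}(t)\bigr)=0$, hence termwise equality. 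The two arguments are logically equivalent, but yours buys two things: it avoids differentiating a step-function \ac{CDF} (the paper's $f'$ is only meaningful as a Stieltjes measure, and the strict positivity of $\int_0^1 f^*-\int_0^1 f$ for a non-\ac{NDoT}-optimal $\mathcal{L}$ is asserted rather than justified), and it makes explicit where finiteness of the tail sums comes from, via Proposition~\ref{prop:ndot-cdt} and Corollary~\ref{cor:mdtcor1}, a point the paper leaves implicit. What the paper's continuous formulation buys is consistency with its normalized-time presentation of the \ac{NDoT} metric; nothing essential is lost either way.
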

\begin{proof}
The proof is constructed by contradiction. Assume that schedule $\mathcal{L}^*$ is \ac{NDoT}-optimal. It implies that $\mathcal{L}^*$ is also \ac{MDT}-optimal. Furthermore, assume that another schedule $\mathcal{L}$ is \ac{MDT}-optimal but not \ac{NDoT}-optimal. The \acp{CDF} of normalized discovery times of $\mathcal{L}$ and $\mathcal{L}^*$ shall be denoted by $f^*$ and $f$, respectively. Since both schedules are \ac{MDT}-optimal we obtain:
\begin{equation*}
\int_{0}^{T^*} \! x f^{*'}(x) \, dx = \int_0^T \! x f^{'}(x) \, dx\,,
\end{equation*}
where $T^*,T\geq 1$ denote the respective \acp{WDT}.

Applying partial integration and using (i) the fact that $T^*=1$ and $f^*(1)=1$ due to the \ac{WDT}-optimality of $\mathcal{L}^*$, and (ii) the fact that $f(0)=f^*(0)=0$, we obtain

\begin{align*}
& x f^{*}(x) \bigg\rvert_{0}^{1} - \int_{0}^{1} \!f^{*}(x)\,dx - x f(x) \bigg\rvert_{0}^{b} + \int_{0}^{b} \!f(x) \, dx  = 0 \\
& 1 - \int_{0}^{1} \!f^{*}(x)\,dx - b + \int_{0}^{b} \!f(x) \, dx  = 0 \\
\Rightarrow\, & 1 - b - \int_{0}^{1} \!f^{*}(x) \, dx + \int_{0}^{1} \!f(x) \, dx + \int_{1}^{b} \!f(x) \, dx = 0 \\
\Rightarrow\, & (b - 1) + \left[\int_{0}^{1} \!f^{*}(x) \, dx -  \int_{0}^{1} \!f(x) \, dx \right] - \int_{1}^{b} \!f(x) \, dx = 0 
\end{align*}

On the other hand, since $\mathcal{L}$ is not \ac{NDoT}-optimal: $\int_{0}^{1} \!f^{*}(x) \, dx -  \int_{0}^{1} \!f(x) \, dx > 0$. Furthermore, observe that $\int_{1}^{b} \!f(x) \, dx \leq b-1$. We obtain a contradiction proving the original claim.
\end{proof}

\section{Computational Complexity and Memory Requirements}
\label{sec:complexity}

Note that \ALG{} algorithms have a compelling property of low complexity. A possible implementation of pseudo code in Algorithm~1 presented in \cite{Karowski18} proceeds as follows. For each configuration it stores a binary variable indicating if it has been covered or not, resulting in $\left|C\right|\left|B\right|\sum_{b\in B}b$ bits of required memory space. At each time slot $t$, it iterates over all channels $c\in C$, computing for each channel, which of configurations $\left\{\,(c,b,t\bmod b)\mid b\in B\,\right\}$ are not yet considered. Finally, it selects a channel, for which the sum of probabilities of these configurations to be selected by a neighbor is the highest. Consequently, computational complexity at each time slot is $\mathcal{O}\left(\left|C\right|\left|B\right|\right)$. It iterates over time slots until all configurations are covered. 

The overall computational complexity depends on the number of time slots required to consider all configurations. Since \alg{} algorithms are \ac{WDT}-optimal for $B\in\mathbb{F}_2$, the number of required time slots is $\max(B)|C|$, resulting in a total complexity of $\mathcal{O}\big(|C|^2|B|\max(B)\big)$. For \ac{BP} sets from $\mathbb{F}_1$, \ALG{} algorithms require at most $LCM(B)\lvert C\rvert$. A proof of this claim is similar to the proof of Proposition~\ref{prop:greedy_f2} and is omitted for brevity.) Thus, an upper bound for the computational complexity over $\mathbb{F}_1$ is $\mathcal{O}\big(|C|^2|B|LCM(B)\big)$.

Please note that if the assumption made in our system model in Section III in~\cite{Karowski18} that the \ac{GCD} of the considered \ac{BP} sets is 1 does not hold, the complexity can be further reduced by replacing $B$ by $B'=\left\{\frac{b}{GCD(B)}\right\}_{b\in B}$. This preprocessing step allows to reduce computational complexity over $\mathbb{F}_2$ to $\mathcal{O}\left(\left|C\right|^2\left|B\right|\frac{\max(B)}{GCD(B)}\right)$, over $\mathbb{F}_1$ the upper bound becomes $\mathcal{O}\left(\left|C\right|^2\left|B\right|\frac{LCM(B)}{GCD(B)}\right)$.

\begin{figure*}
\centering
\subfloat[\acf{SMDT} $\mathbb{F}_2$] {
\includegraphics[width=\evalFigWidth\textwidth]{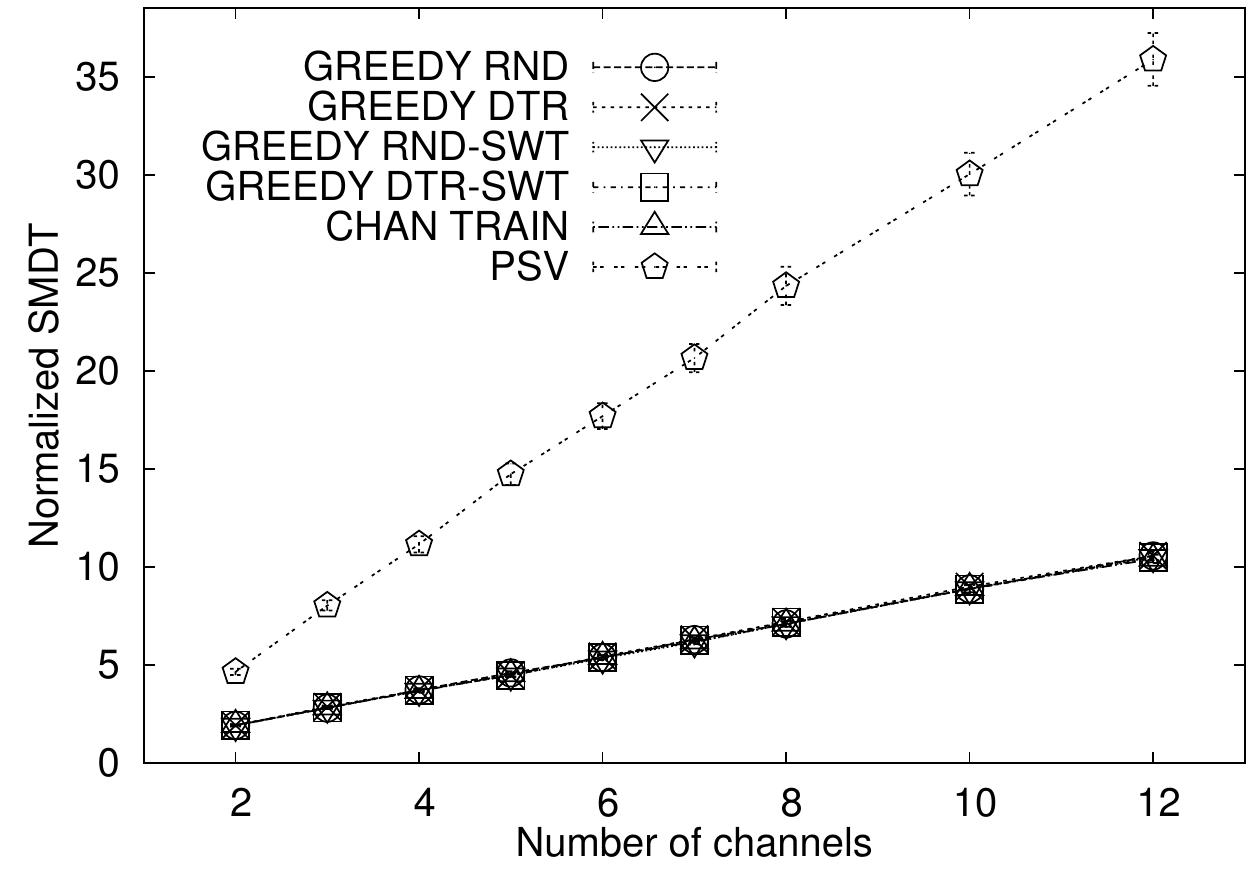}
\label{fig:F2_numChan_discTime_sim}
}
\subfloat[\acf{SMDT} $\mathbb{F}_2$]{
\includegraphics[width=\evalFigWidth\textwidth]{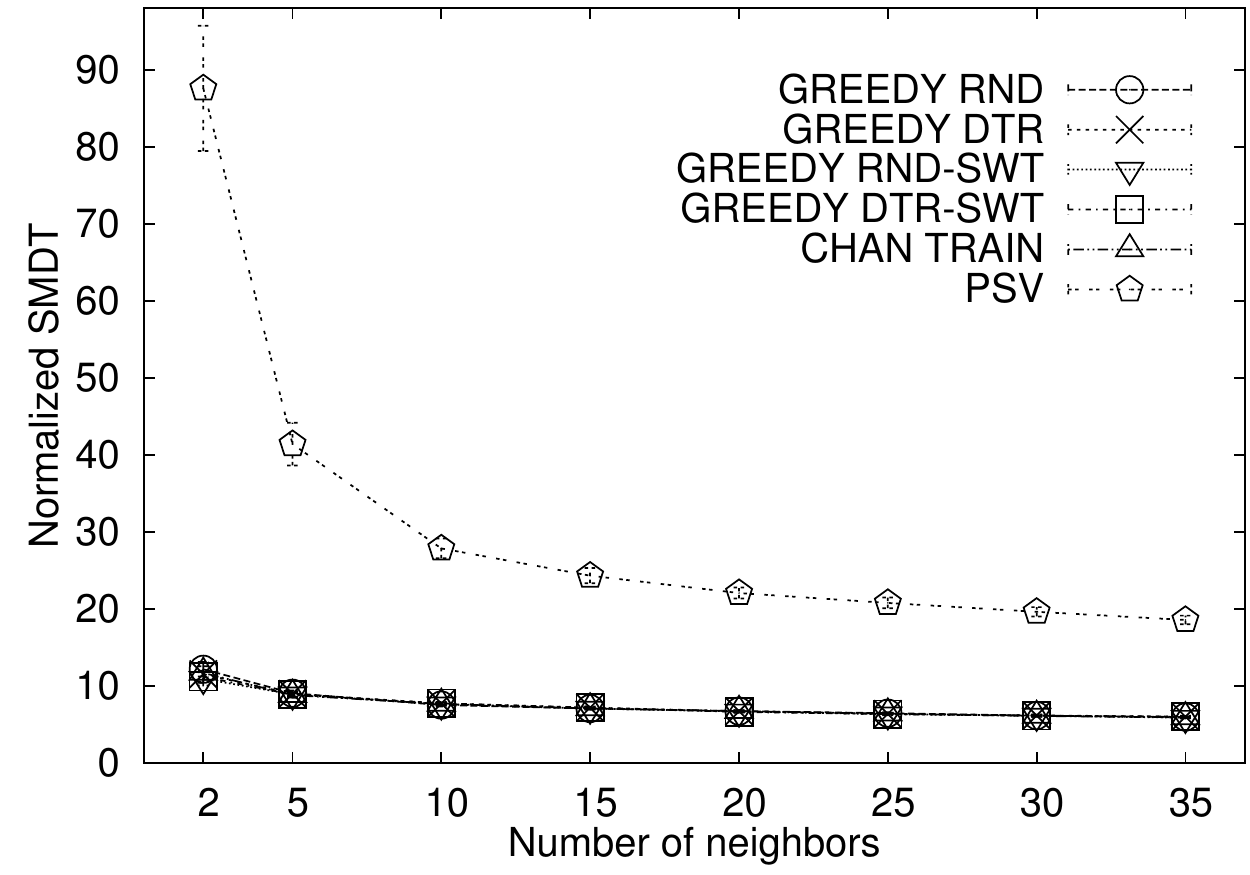}
\label{fig:F2_numConf_discTime_sim}
} \\
\subfloat[\acf{SNDoT} $\mathbb{F}_2$]{	
\includegraphics[width=\evalFigWidth\textwidth]{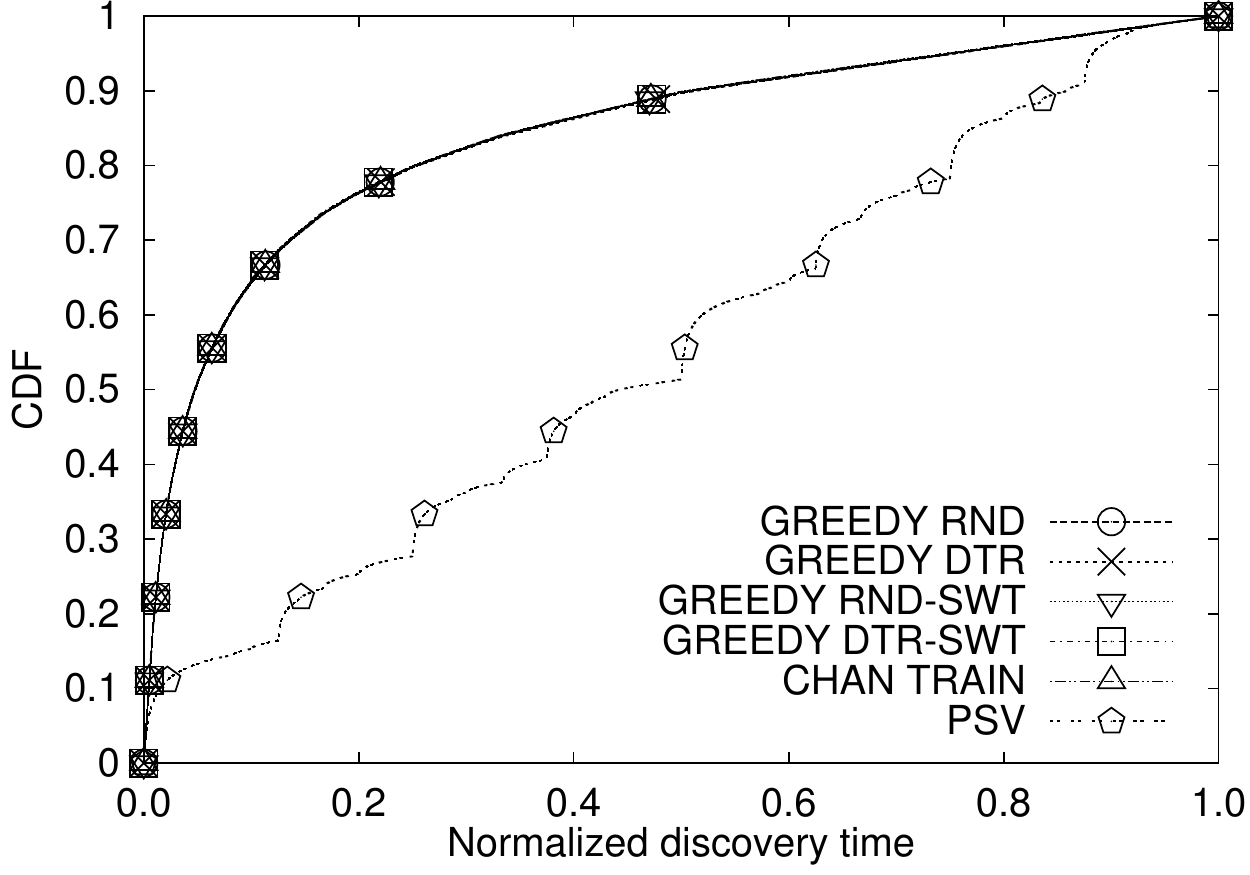}
\label{fig:F2_cdf_sim}
}
\caption{Evaluation results obtained by simulation for the family of \ac{BP} sets $\mathbb{F}_2$ (see Section~\ref{subsec:sim_results_appendix_f2}).}
\label{fig:F2_sim_results_1}
\end{figure*}

\begin{figure}
\centering
\subfloat[\acf{SWDT} $\mathbb{F}_2$]{
\includegraphics[width=\evalFigWidth\textwidth]{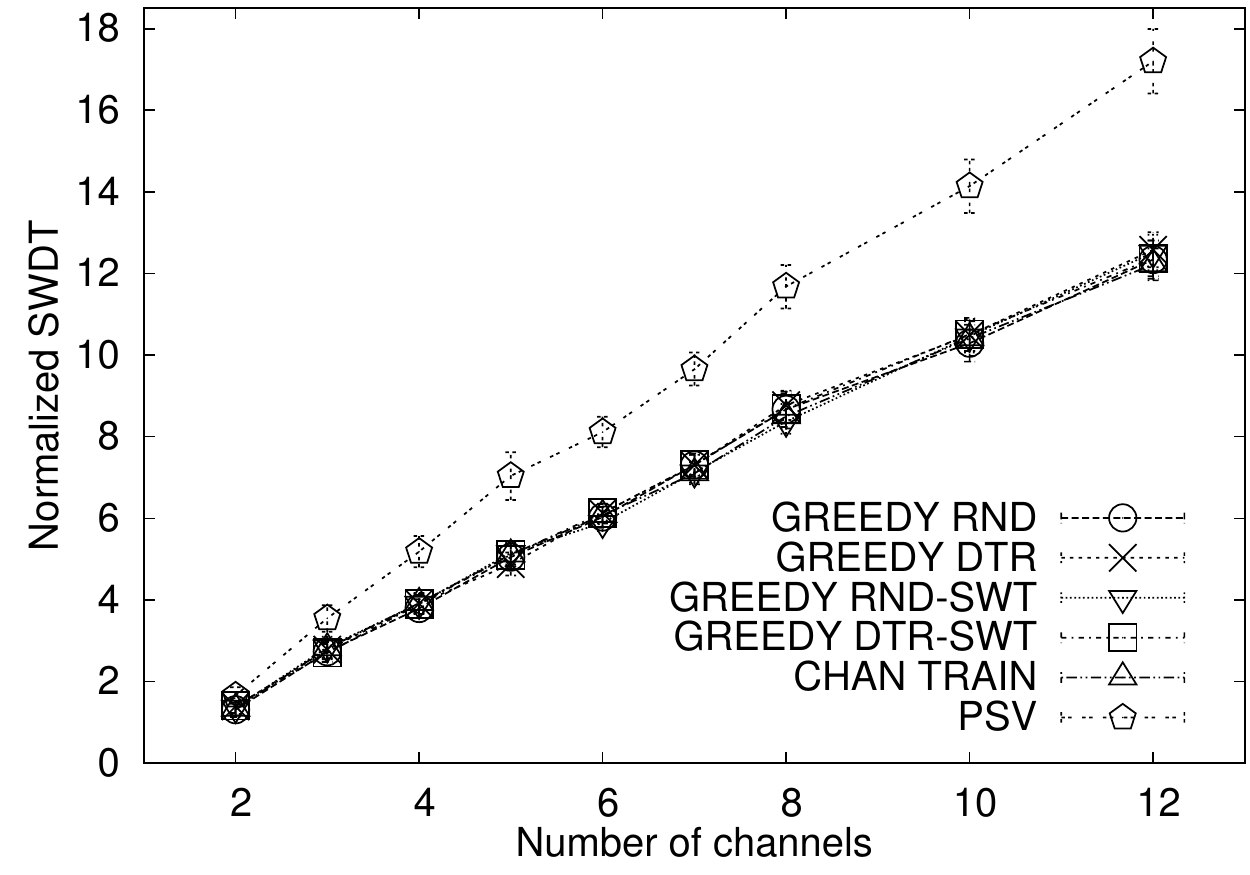}
\label{fig:F2_numChan_makespan_sim}
}
\subfloat[\acf{SWDT} $\mathbb{F}_2$]{
\includegraphics[width=\evalFigWidth\textwidth]{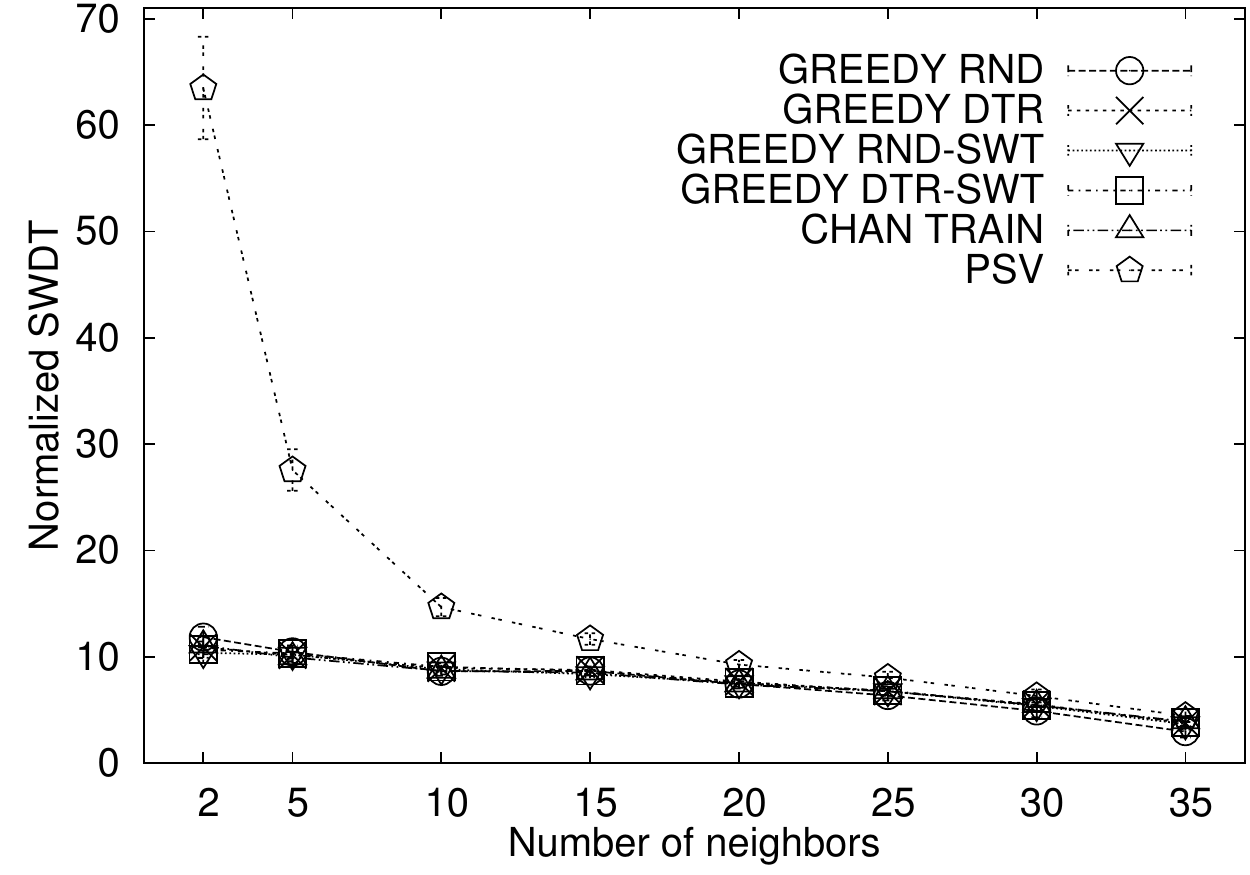}
\label{fig:F2_numConf_makespan_sim}
}
\caption{Evaluation results obtained by simulation for the family of \ac{BP} sets $\mathbb{F}_2$ (see Section~\ref{subsec:sim_results_appendix_f2}).}
\label{fig:F2_sim_results_2}
\end{figure}

\section{Additional simulation results}
\label{sec:appendix_results}

In the following we present supplementary simulation results evaluating three performance metrics that are only defined for specific instances of the studied problem, and are therefore named \acf{SMDT}, \acf{SNDoT} and \acf{SWDT}. These metrics are more precisely specified in the following. The setting is described in Section~IX-A in~\cite{Karowski18}.

\subsection{Performance Metrics}

In the following we evaluate three performance metrics: \ac{SNDoT}, \ac{SMDT}, and \ac{SWDT}. The \ac{SMDT} is defined as $\frac{1}{|N|} \sum_{\nu\in N}T_\nu(\mathcal{L})$. It relates to the \ac{MDT} in the way a sample mean relates to the 'true' mean. Since a sampled mean is an unbiased estimator, the expected value of the \ac{SMDT} is the \ac{MDT}. 
The \ac{SWDT} is defined as $\max_{\nu\in N} T_\nu(\mathcal{L})$. It is the analogon of the \ac{WDT} in a specific network environment, where only a subset of possible configurations is present. However, while the \ac{SMDT} can be interpreted as an estimation of \ac{MDT}, \ac{SWDT} has no such relationship with the \ac{WDT}. In fact, the \ac{WDT} is a more important property of a listening schedule, since it determines the required execution time in scenarios, where a complete discovery is desired. The \ac{SWDT}, on the other hand, cannot be used as an indicator for the discoverer to stop the discovery process, since the discoverer cannot know if all neighbors have been discovered or not, before it has probed for all potential configurations, which requires an execution time equal to the \ac{WDT}. Still, when the number of neighbors increases, \ac{SWDT} converges to the \ac{WDT} due to the law of large numbers. Finally, \ac{SNDoT} is the number of discovered neighbors divided by the total number of neighbors as a function of the execution time of the schedule. Also for the \ac{SNDoT} the law of large numbers implies convergence to the \ac{CDF} of discovery times for large numbers of neighbors.

The discovery times used in \ac{SNDoT} are normalized as described in Section~IV in~\cite{Karowski18}. \ac{SMDT} and \ac{SWDT} are normalized w.r.t.\@ their corresponding optimum values obtained by solving an \ac{ILP} similar to \MDTOPT{} but with a different objective function. For each scenario with a given set of neighbors and their configurations two \acp{ILP} are solved that generate listening schedule minimizing the \ac{SMDT} and \ac{SWDT} among those neighbors. We use the following optimization variables.

\begin{figure*}
\centering
\subfloat[\acf{SMDT} $\mathbb{F}_1$]{
\includegraphics[width=\evalFigWidth\textwidth]{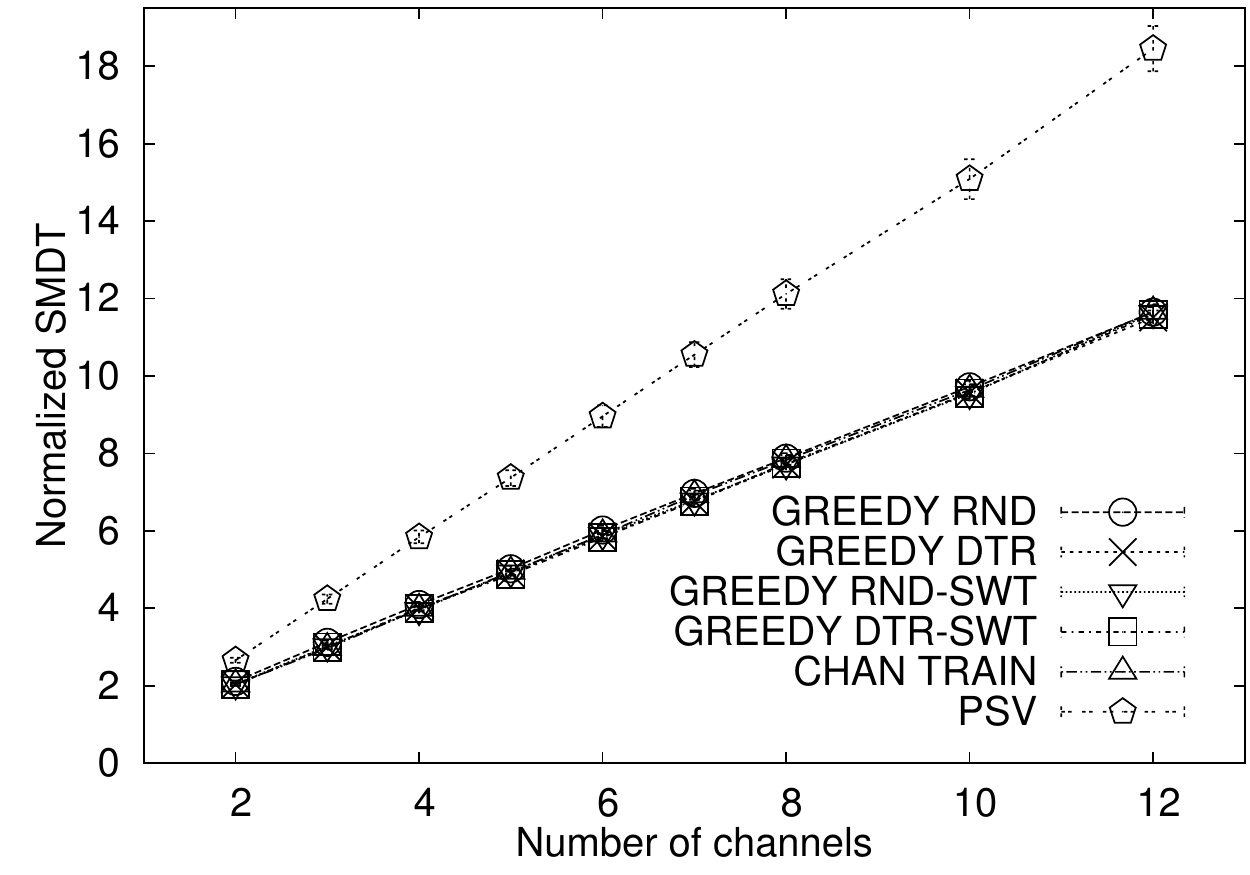}
\label{fig:F1_numChan_discTime_sim}
}
\subfloat[\acf{SMDT} $\mathbb{F}_1$]{
\includegraphics[width=\evalFigWidth\textwidth]{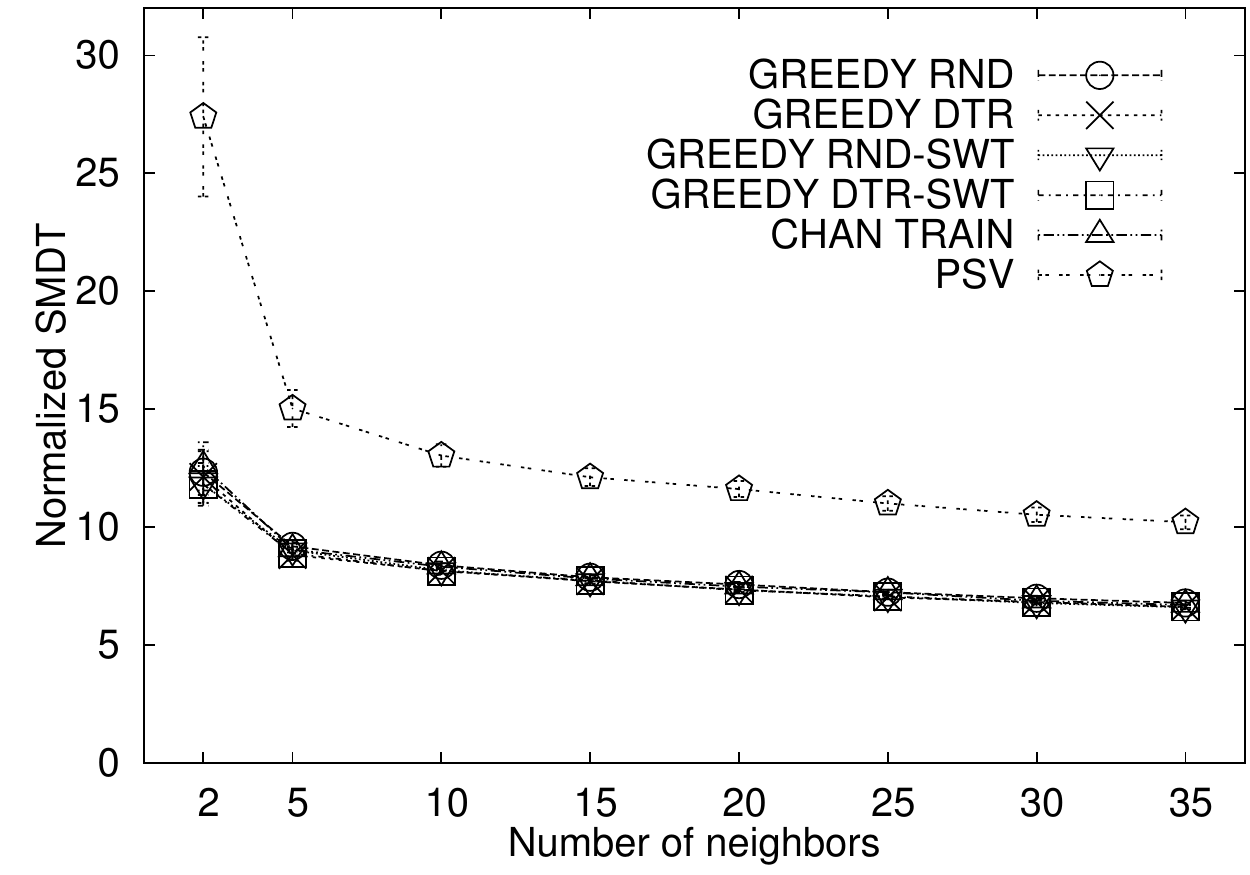}
\label{fig:F1_numConf_discTime_sim}
} \\
\subfloat[\acf{SNDoT} $\mathbb{F}_1$]{	
\includegraphics[width=\evalFigWidth\textwidth]{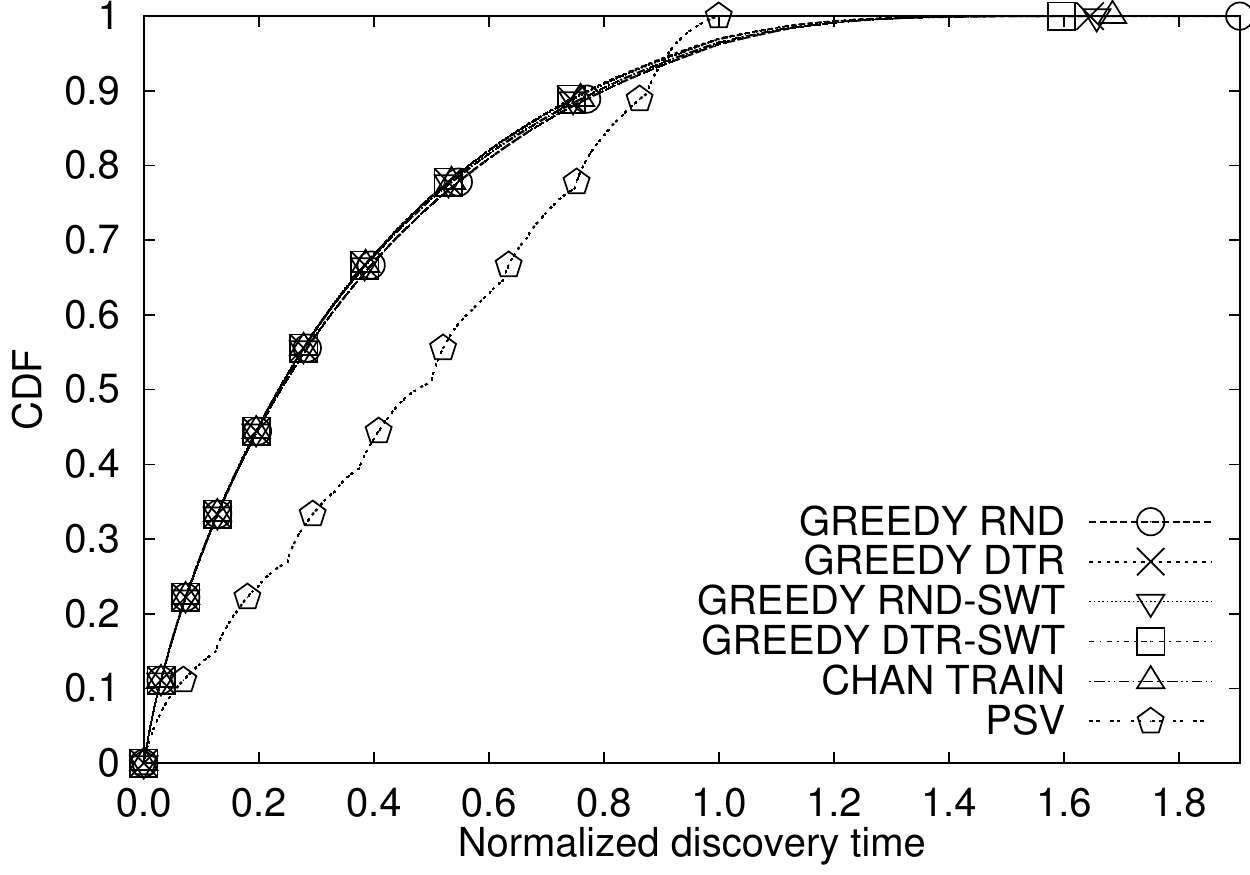}
\label{fig:F1_mdot_sim}
}
\caption{Evaluation results obtained by simulation for the family of \ac{BP} sets $\mathbb{F}_1$ (see Section~\ref{subsec:sim_results_appendix_f1}).}
\label{fig:F1_sim_results_1}
\end{figure*}

\begin{figure}
\centering
\subfloat[\acf{SWDT} $\mathbb{F}_1$]{
\includegraphics[width=\evalFigWidth\textwidth]{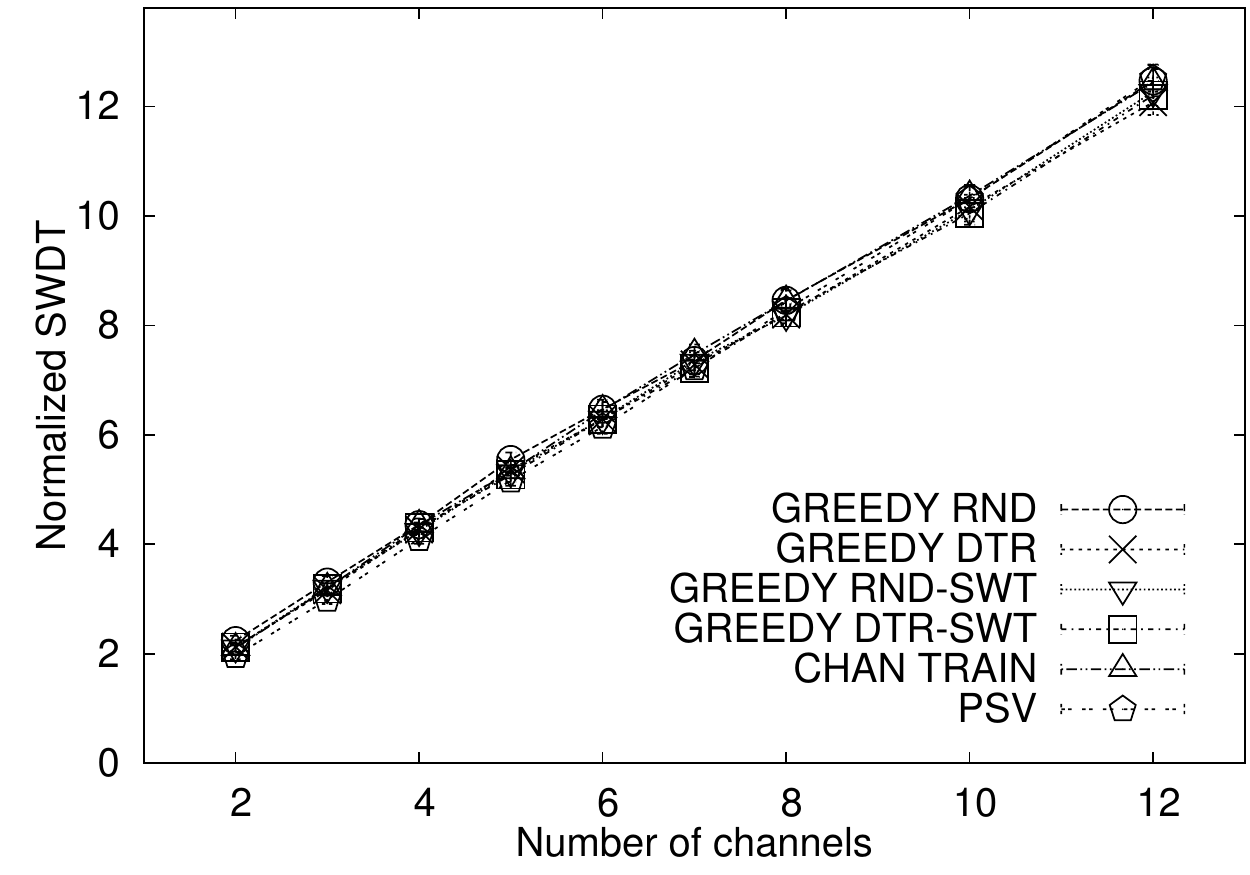}
\label{fig:F1_numChan_makespan_sim}
}
\subfloat[\acf{SWDT} $\mathbb{F}_1$]{
\includegraphics[width=\evalFigWidth\textwidth]{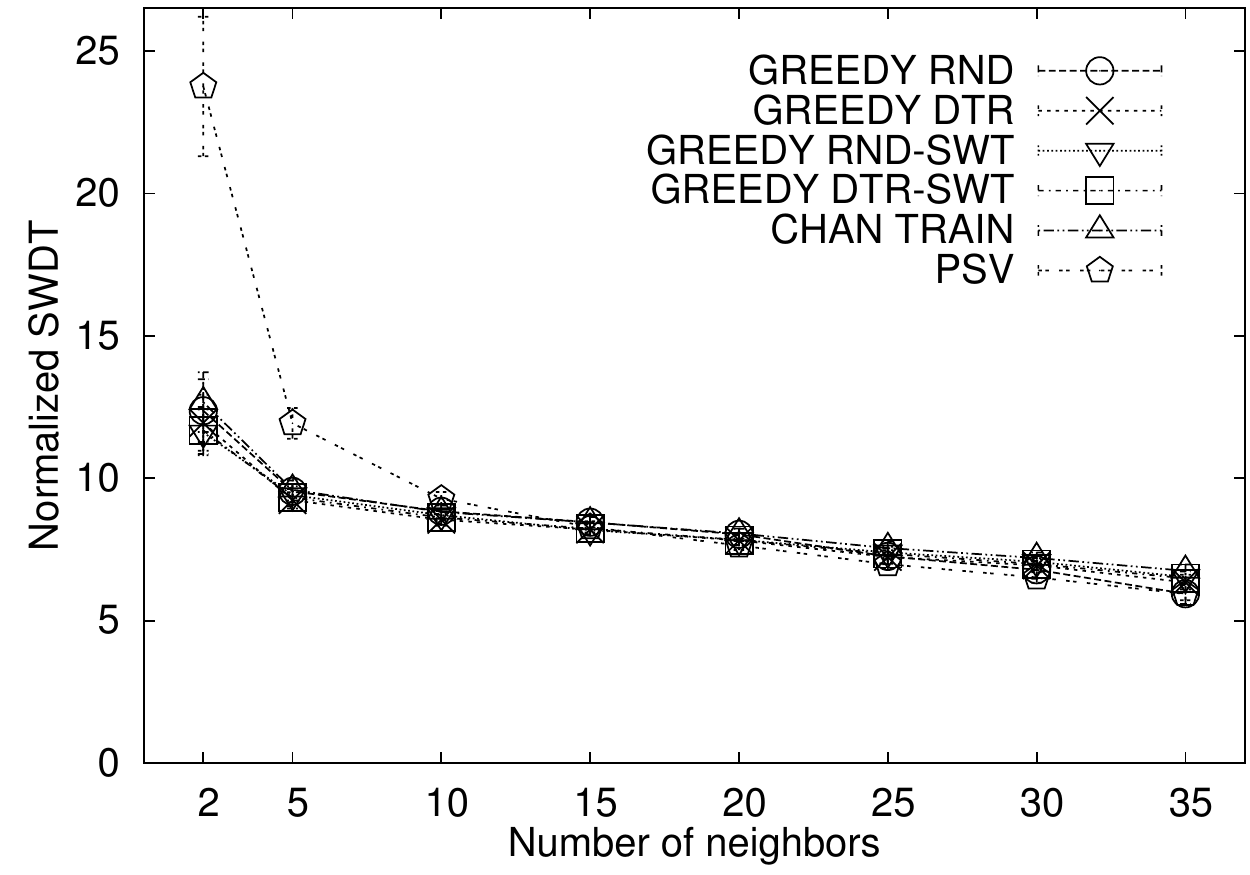}
\label{fig:F1_numConf_makespan_sim}
}
\caption{Evaluation results obtained by simulation for the family of \ac{BP} sets $\mathbb{F}_1$ (see Section~\ref{subsec:sim_results_appendix_f1}).}
\label{fig:F1_sim_results_2}
\end{figure}

\begin{eqnarray*}
\begin{aligned}
y_{i\nu} &= \begin{cases}
				\begin{aligned}
					 \textrm{1  ,} & \textrm{ if neighbor } \nu \in N  \textrm{ is detected during time slot }\\
												 & i b_\nu + \delta_\nu \\
           \textrm{0  ,} & \textrm{ otherwise}  
				\end{aligned}					
        \end{cases} \\
h_{ct} &= \begin{cases}
				\begin{aligned}
           \textrm{1  ,} & \textrm{ if channel } c \textrm{ is scanned during time slot } t \\
           \textrm{0  ,} & \textrm{ otherwise}  
				\end{aligned}	
        \end{cases}
\end{aligned}
\end{eqnarray*}

We use the following constraints.

		\begin{align*}
& \sum_{i=0}^{\floor{\frac{t_{max} - \delta_\nu}{b_\nu}}} y_{i\nu} = 1 &&\text{for all}\;\nu \in N  \label{eq:eval_C4}\tag{\~C1}\\
& y_{i\nu} \leq h_{c_{\nu},\delta_\nu + ib_\nu} &&\text{for all}\;\nu \in N, \\
& && i \in \left\{0, \ldots, \floor{\frac{t_{max} - \delta_\nu}{b_\nu}} \right\} \label{eq:eval_C5}\tag{\~C2}\\
& \sum_{c \in C} h_{ct} \leq 1 &&\text{for all}\; t \in \{0, \ldots, t_{max}\} \label{eq:eval_C6}\tag{\~C3}
\end{align*}

Constraint~\ref{eq:eval_C4} ensures that for each neighbor $\nu \in N$ a scan is scheduled in exactly one of the beacon time slots $\mathcal{T}_\nu$. The last two constraints~\ref{eq:eval_C5} and~\ref{eq:eval_C6} make sure that during each time slot not more than one channel is scanned. In comparison to the numerical evaluation we allow for much larger \acp{BP} from $B\in\mathbb{F}_1$. Therefore, to keep the \acp{ILP} computationally feasible, the maximum time slot used in the optimization problems is limited to $t_{max} = \min(1000 \max(B)|C| - 1, \text{LCM}(B)|C| - 1)$.

We use the following objective function to find the optimum \ac{SMDT} for a given network environment.
\begin{align*} 
\min\;\frac{1}{\vert N\vert} \sum_{\nu \in N}\sum_{i=0}^{\floor{\frac{t_{max} - \delta_\nu}{b_\nu}}} y_{i\nu} \left(ib_\nu + \delta_\nu\right)\,.
\end{align*}

We use the following objective function, together with an additional constraint, to find the optimum \ac{SWDT} for a given network environment.

\begin{align*}
\min\quad     &z &\notag   \\
\text{s.t.}\quad   &y_{i\nu} \left(ib_\nu + \delta_\nu\right) \leq z \\
&\text{for all}\;\nu \in N,\; i \in \left\{0, \ldots, \floor{ \frac{t_{max} - \delta_\nu}{b_\nu}} \right\}\,.
\end{align*}

Note that the results for the \ac{SMDT} and \ac{SWDT} presented in the following cannot be directly compared with the numerical results of the \ac{MDT} and \ac{WDT} presented in Section~VIII-B in~\cite{Karowski18} due to different methods of normalization.

\subsection{Results}
\label{subsec:sim_results_appendix}

In the following, we present simulations results for \ac{BP} families $\mathbb{F}_1$ and $\mathbb{F}_2$ accompanied by confidence intervals for a confidence level of 95\%.

\subsubsection{Results for $\mathbb{F}_2$}
\label{subsec:sim_results_appendix_f2}

Figures~\ref{fig:F2_numChan_discTime_sim} and~\ref{fig:F2_numConf_discTime_sim} depict the results for the normalized \ac{SMDT}, as a function of the number of channels and of the number of neighbors, respectively.
For two channels the PSV strategy results in a \ac{SMDT} which is about twice as high as the results of the \ALG{} approaches and CHAN TRAIN. However, the gap  is increasing with larger number of channels. For twelve channels the \ac{SMDT} of PSV is more than 350\% higher compared to the results of \ALG{} / CHAN TRAIN. For all analyzed number of neighbors PSV results again in a \ac{MDT} which is at least 350\% higher than the results achieved by \ALG{} and CHAN TRAIN.  In addition, all strategies diverge from the optimum with an increasing number of channels, while they approach the optimum with an increased number of neighbors.

Figure~\ref{fig:F2_cdf_sim} shows the results for \ac{SNDoT}. \ALG{} and CHAN TRAIN discover up to 50\% more neighbors in average at certain points in time without having an impact on the discovery time of the last percentage of discovered neighbors. 

Figures~\ref{fig:F2_numChan_makespan_sim} and~\ref{fig:F2_numConf_makespan_sim} display the normalized \acp{SWDT}. With two channels all strategies result in similar values. However, for larger number of channels, and even more so for low numbers of neighbors, the PSV strategy results in a significantly larger \ac{SWDT}. Again, we remark that the \ac{SWDT} is of a lower importance than the \ac{WDT} studied in Section~VIII in~\cite{Karowski18} (see also Section~IV).

\subsubsection{Results for $\mathbb{F}_1$}
\label{subsec:sim_results_appendix_f1}

Figures~\ref{fig:F1_sim_results_1} and ~\ref{fig:F1_sim_results_2} show the results for the family of \ac{BP} sets $\mathbb{F}_1$. The findings are qualitatively mostly similar to those for \ac{BP} sets from $\mathbb{F}_2$.

The normalized \ac{SMDT} as a function of the number of channels is depicted in Figures~\ref{fig:F1_numChan_discTime_sim} and~\ref{fig:F1_numConf_discTime_sim}. Similar to the results for $\mathbb{F}_2$ our approaches significantly outperform PSV. However, the gap between them has decreased.

Figure~\ref{fig:F1_mdot_sim} shows the results of the \ac{SNDoT}. The majority of neighbors is discovered faster by the \ALG{} strategies and CHAN TRAIN. However, about 10\% of neighbors are discovered later by \ALG{} and CHAN TRAIN as compared to PSV.

The normalized \ac{SWDT} is shown in Figures~\ref{fig:F1_numChan_makespan_sim} and~\ref{fig:F1_numConf_makespan_sim}. In contrast to the results for $\mathbb{F}_2$ the PSV strategy achieves similar results as our strategies for all analyzed number of channels. 

\subsection{Summary}

The simulation results reveal that even under realistic conditions the proposed algorithms significantly outperform the \ac{PSV} strategy regarding the \ac{SMDT} for the family of \ac{BP} sets $\mathbb{F}_2$ as well as $\mathbb{F}_1$ and \ac{SWDT} for $\mathbb{F}_2$ for different number of channels and number of neighbors. 
The results for \ac{SNDoT} are similar to the numerical results presented in Section~VIII-B in~\cite{Karowski18}. For the family of \ac{BP} sets $\mathbb{F}_2$ our strategies discover up to 50\% more neighbors compared to PSV at certain points in time without penalizing the discovery of the last neighbors. However, for the family of \ac{BP} sets $\mathbb{F}_1$ the gap decreases and the discovery of the last 10\% of neighbors requires additional time by our approaches in comparison with PSV.

\acrodef{AP}{Access Point} \acrodefplural{AP}[AP's]{Access Points}
\acrodef{BP}{Beacon Period} \acrodefplural{BP}[BP's]{Beacon Periods}
\acrodef{BO}{Beacon Order} \acrodefplural{BO}[BO's]{Beacon Orders}
\acrodef{WDT}{Worst-Case Discovery Time} \acrodefplural{WDT}[WDT's]{Worst-Case Discovery Times}
\acrodef{SWDT}{Sample Worst-Case Discovery Time} \acrodefplural{SCDT}[SWDT's]{Sample Worst-Case Discovery Times}
\acrodef{DTN}{Delay Tolerant Network} \acrodefplural{DTN}[DTN's]{Delay Tolerant Networks}
\acrodef{GCD}{Greatest Common Divisor} 
\acrodef{IoT}{Internet of Things}
\acrodef{ILP}{Integer Linear Program} \acrodefplural{ILP}[ILP's]{Integer Linear Programs}
\acrodef{LCM}{Least Common Multiple} \acrodefplural{LCM}[LCM's]{Least Common Multiples}
\acrodef{LP}{Linear Program} \acrodefplural{LP}[LP's]{Linear Programs}
\acrodef{MAC}{Media Access Control}
\acrodef{CR}{Cognitive Radio}
\acrodef{PU}{Primary User} \acrodefplural{PU}[PU's]{Primary Users}
\acrodef{SU}{Secondary User}\acrodefplural{SU}[SU's]{Secondary Users}
\acrodef{CH}{Channel Hopping}
\acrodef{ISM}{Industrial, Scientific and Medical}
\acrodef{MDT}{Mean Discovery Time} \acrodefplural{MDT}[MDT's]{Mean Discovery Times}
\acrodef{NDoT}{Number of Discoveries over Time}
\acrodef{OFDM}{Orthogonal Frequency-Division Multiplexing}
\acrodef{PSV}{Passive Scan}
\acrodef{SMDT}{Sample Mean Discovery Time} \acrodefplural{SMDT}[SMDT's]{Sample Mean Discovery Times}
\acrodef{SNDoT}{Sample Number of Discoveries over Time}
\acrodef{TU}{Time Unit} \acrodefplural{TU}[TU's]{Time Units}
\acrodef{CDF}{Cumulative Distribution Function} \acrodefplural{CDF}[CDF's]{Cumulative Distribution Functions}
\acrodef{PDF}{Probability Distribution Function} \acrodefplural{PDF}[PDF's]{Probability Distribution Functions}
\acrodef{WLAN}{Wireless Local Area Network} \acrodefplural{WLAN}[WLAN's]{Wireless Local Area Networks} 
\bibliography{references}

\end{document}